\providecommand{\algorithmname}{Algorithm}
\theoremstyle{plain}
\newtheorem{thm}{\protect\theoremname}
  \theoremstyle{plain}
  \newtheorem{lem}[thm]{\protect\lemmaname}
  \theoremstyle{definition}
  \newtheorem{defn}[thm]{\protect\definitionname}
  \theoremstyle{definition}
  \newtheorem{example}[thm]{\protect\examplename}
  \theoremstyle{remark}
  \newtheorem{rem}[thm]{\protect\remarkname}
   \edef\Gin@extensions{\Gin@extensions,.mps}
\newcommand{\bigO}{{O\tilde{\phantom{\imath}}}}
\newcommand{\cdeg}{{\rm cdeg}\,}
\newcommand{\diag}{{\rm diag}\,}
\DeclareMathOperator{\determinant}{determinant}
\DeclareMathOperator{\colBasis}{ColumnBasis}
  \providecommand{\definitionname}{Definition}
  \providecommand{\examplename}{Example}
  \providecommand{\lemmaname}{Lemma}
  \providecommand{\remarkname}{Remark}
\providecommand{\theoremname}{Theorem}
\begin{document}

\title{Fast and deterministic computation of the determinant of a polynomial matrix}

\author{Wei Zhou and George Labahn}

\thanks{Cheriton School of Computer Science, University of Waterloo, Waterloo
ON, Canada N2L 3G1 \textbf{\{w2zhou,glabahn\}@uwaterloo.ca}}
\begin{abstract}
Given a square, nonsingular matrix of univariate polynomials 
$\mathbf{F}\in\mathbb{K}[x]^{n\times n}$ over a field $\mathbb{K}$, we give a
deterministic algorithm for finding the determinant of $\mathbf{F}$. The complexity of the algorithm is $\bigO \left(n^{\omega}s\right)$ field operations where $s$ is the average column degree or the average row degree of $\mathbf{F}$. Here $\bigO$ notation is Big-$O$ with log factors omitted and $\omega$ 
is the exponent of matrix multiplication. 
\end{abstract}
\maketitle

\section{Introduction}

Let $\mathbf{F}\in\mathbb{K}[x]^{n\times n}$ be a square, nonsingular polynomial matrix with
$\mathbb{K}$ a field. In this paper we give a deterministic algorithm for finding the determinant of $\mathbf{F}$. The complexity of the algorithm is $\bigO\left(n^{\omega}s\right)$ field operations from $\mathbb{K}$ where $s$ is the average column degree or the average row degree of $\mathbf{F}$. Here $\bigO$ denotes $O$ with $\log^c (nd)$ factors suppressed for some positive real constant $c$ and $\omega$ is the exponent of matrix multiplication. 
The fact that the complexity of determinant computation is related to the complexity of matrix multiplication is well-known. In the case of matrices over a field, for example, Bunch and Hopcroft \cite{BunchHopcroft1974} showed that if there exists a fast algorithm for matrix multiplication then there also exists an algorithm for determinant computation with the same exponent. 

In the case of square matrices of polynomials of degree at most $d$, Storjohann \cite{storjohann:phd2000} gives a recursive deterministic algorithm to compute a determinant making use of fraction-free Gaussian elimination with a cost of $\bigO(n^{\omega + 1} d)$ operations. A $O(n^3 d^2)$ deterministic algorithm was later given by Mulders and Storjohann \cite{mulders-storjohann:2003}, modifying their weak Popov form computation. Using low rank perturbations, Eberly, et al  \cite{EberlyGiesbrechtVillard} gave a determinant algorithm requiring  $\bigO(n^{2 + \omega/2} d)$ field operations, while Storjohann \cite{storjohann:2003} used higher order lifting to give an algorithm which reduces the complexity to $(n^\omega ( \log n)^2 d^{1 + \epsilon})$ field operations. Finally, we mention the algorithm of Giorgi et al  \citep{Giorgi2003} which computes the determinant with complexity $O^{\sim}\left(n^{\omega}d\right)$. However the algorithms in both  \cite{EberlyGiesbrechtVillard} and \cite{storjohann:2003} are both probabilistic, while the algorithm from  \cite{Giorgi2003} only works efficiently on a class of generic input matrices, matrices that are well behaved in the computation.  
Deterministic algorithms for general polynomial matrices with complexity similar to that of fast matrix multiplication have not appeared previously. 

In the case of an arbitrary commutative ring (with multiplicative unit) or integers other fast determinant algorithms have been given by  Kaltofen \cite{kaltofen92}, 
Abbott et al \cite{abbott}, Eberly et al  \cite{EberlyGiesbrechtVillard} and Kaltofen and Villard \cite{KaltofenVillard}. We refer the reader to the last named  paper and the references therein for more details on efficient determinant computation of such matrices.

Our algorithm takes advantage of a fast algorithm  \cite{za2012} for computing a shifted minimal kernel basis to efficiently eliminate blocks of a polynomial matrix. More specifically, we use kernel bases to partition our input $\mathbf{F}$ as
$$
\mathbf{F} \cdot \mathbf{U}=\begin{bmatrix}\mathbf{G}_{1} & 0\\
* & \mathbf{G}_{2}
\end{bmatrix}
$$
with $\mathbf{U}$ unimodular.  Such a unimodular transformation almost preserves the determinant of  $\mathbf{F}$, but results in  an extra factor coming from the determinant of the unimodular matrix $\mathbf{U}$, a nonzero field element in $\mathbb{K}$. The computation of the determinant of $\mathbf{F}$ can therefore be reduced to the computations of the determinants of $\mathbf{U}$,  $\mathbf{G}_{1}$ and $\det \mathbf{G}_{2}$. The computations of $\det \mathbf{G}_{1}$ and $\det \mathbf{G}_{2}$ are similar to the original problem of computing $\det \mathbf{F}$, but with input matrices of lower dimension and possibly higher degrees.  To achieve the desired efficiency, however, these computations need to be done without actually determining the unimodular matrix $\mathbf{U}$, since its potential large degree size may 
prevent it from being efficiently computed.  We show how the determinant of $\mathbf{U}$ can be computed without actually computing the entire unimodular multiplier $\mathbf{U}$. In addition, for fast, recursive computation, the degrees of each of the diagonal blocks need to be controlled in order to ensure that these are also not too large. 
We accomplish this by making use of the concepts of {\em shifted} minimal kernel bases and column bases of polynomial matrices. Shifts basically allow us to control the computations using column degrees rather than the degree of the polynomial matrix. This becomes an issue when the degrees of the input columns vary considerably from column to column (and hence to the degree of the input). The shifted kernel and column bases computations can be done efficiently using algorithms from \cite{za2012}  and \cite{za2013}.  
We remark that the use of shifted minimal kernel bases and column bases, used in the context of fast block elimination, have also been used for deterministic algorithms for inversion \cite{zls2014} and unimodular completion
\cite{zl2014} of polynomial matrices.

The remainder of this paper is organized as follows. In the next section we give preliminary information for shifted degrees, kernel and column bases of polynomial matrices. Section 3 then contains the algorithm for recursively computing the diagonal elements of a triangular form and a method to compute the determinants of the unimodular matrices.
The paper ends with a conclusion and topics for future research. 

\section{Preliminaries}

In this section we 
give the basic definitions and properties of {\em shifted
degree}, {\em minimal kernel basis} and {\em column basis} for a matrix
of polynomials. These will be the building blocks used in our algorithm.

\subsection{Shifted Degrees}

Our methods makes use of the concept of {\em shifted} degrees of
polynomial matrices \citep{BLV:1999}, basically shifting the importance
of the degrees in some of the rows of a basis. For a column vector
$\mathbf{p}=\left[p_{1},\dots,p_{n}\right]^{T}$ of univariate polynomials
over a field $\mathbb{K}$, its column degree, denoted by $\cdeg\mathbf{p}$,
is the maximum of the degrees of the entries of $\mathbf{p}$, that
is, 
\[
\cdeg~\mathbf{p}=\max_{1\le i\le n}\deg p_{i}.
\]
 The \emph{shifted column degree} generalizes this standard column
degree by taking the maximum after shifting the degrees by a given
integer vector that is known as a \emph{shift}. More specifically,
the shifted column degree of $\mathbf{p}$ with respect to a shift
$\vec{s}=\left[s_{1},\dots,s_{n}\right]\in\mathbb{Z}^{n}$, or the
\emph{$\vec{s}$-column degree} of $\mathbf{p}$ is 
\[
\cdeg_{\vec{s}}~\mathbf{p}=\max_{1\le i\le n}[\deg p_{i}+s_{i}]=\deg(x^{\vec{s}}\cdot\mathbf{p}),
\]
 where 
\[
x^{\vec{s}}=\diag\left(x^{s_{1}},x^{s_{2}},\dots,x^{s_{n}}\right)~.
\]
 For a matrix $\mathbf{P}$, we use $\cdeg\mathbf{P}$ and $\cdeg_{\vec{s}}\mathbf{P}$
to denote respectively the list of its column degrees and the list
of its shifted $\vec{s}$-column degrees. When $\vec{s}=\left[0,\dots,0\right]$,
the shifted column degree specializes to the standard column degree.

 Shifted degrees have been used previously in polynomial matrix computations
and in generalizations of some matrix normal forms \citep{BLV:jsc06}.
The shifted column degree is equivalent to the notion of \emph{defect}
commonly used in the literature.

Along with shifted degrees we also make use of the notion of a matrix
polynomial being column (or row) reduced. A polynomial matrix $\mathbf{F}$
is column reduced if the leading column coefficient matrix, that is
the matrix 
\[
[\mbox{coeff}(f_{ },x,d_{j})]_{1\leq i,j\leq n},\mbox{ with }\vec{d}=\mbox{cdeg }\mathbf{F},
\]
 has full rank. A polynomial matrix $\mathbf{F}$ is $\vec{s}$-column
reduced if $x^{\vec{s}}\mathbf{F}$ is column reduced. 

The usefulness of the shifted degrees can be seen from their applications
in polynomial matrix computation problems \citep{ZL2012,zl2014,za2012,zls2014}.
One of its uses is illustrated by the following 
lemma, which follows directly from the definition of shifted degree.
\begin{lem}
\label{lem:predictableDegree} Let $\vec{s}$ be a shift whose entries
bound the corresponding column degrees of $\mathbf{A} \in
\mathbb{K}^{\ast \times m}$. Then for any polynomial matrix
$\mathbf{B}\in\mathbb{K}\left[x\right]^{m\times \ast}$, the column
degrees of $\mathbf{A}\cdot \mathbf{B}$ are bounded by the corresponding
$\vec{s}$-column degrees of $\mathbf{B}$.
\end{lem}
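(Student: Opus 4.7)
The plan is to unwind both sides directly from the definition of shifted column degree. First I would fix an arbitrary column index $j$ of the product $\mathbf{A}\mathbf{B}$ and write
\[
(\mathbf{A}\mathbf{B})_{\cdot,j} \;=\; \sum_{i=1}^{m} \mathbf{A}_{\cdot,i}\, b_{ij},
\]
where $\mathbf{A}_{\cdot,i}$ denotes the $i$-th column of $\mathbf{A}$ and $b_{ij}$ the $(i,j)$-entry of $\mathbf{B}$. Using the standard bound $\deg(pq)\le \deg p + \deg q$ for polynomials, together with the hypothesis $\deg \mathbf{A}_{\cdot,i} \le s_i$, I would bound the degree of each summand by $s_i + \deg b_{ij}$. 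Taking the maximum over $i$ then gives $\deg (\mathbf{A}\mathbf{B})_{\cdot,j} \le \max_{i}(s_i + \deg b_{ij})$, which by definition equals $\cdeg_{\vec{s}}\mathbf{B}_{\cdot,j}$. Since $j$ was arbitrary, this establishes the claimed column-wise bound.

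A more structural way of seeing the same fact, if one prefers, is to factor $\mathbf{A}\mathbf{B} = (\mathbf{A}\, x^{-\vec{s}})(x^{\vec{s}}\mathbf{B})$ over Laurent polynomials: by hypothesis the first factor has all column degrees at most zero, while the column degrees of $x^{\vec{s}}\mathbf{B}$ are exactly $\cdeg_{\vec{s}}\mathbf{B}$, and left-multiplication by a matrix whose column degrees are non-positive cannot increase column degrees. Either route is fine; the elementary summation above is shorter and avoids leaving the polynomial ring.

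I do not expect any real obstacle. The only point requiring a little care is that the hypothesis is \emph{per column} of $\mathbf{A}$, and the conclusion is \emph{per column} of $\mathbf{A}\mathbf{B}$, so the bound must be tracked column-by-column rather than by a single global degree estimate; once the sum defining $(\mathbf{A}\mathbf{B})_{\cdot,j}$ is written out, this bookkeeping is automatic.
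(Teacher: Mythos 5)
Your proof is correct and matches the paper's treatment: the paper states that the lemma ``follows directly from the definition of shifted degree'' and gives no further argument, and your column-by-column expansion $(\mathbf{A}\mathbf{B})_{\cdot,j}=\sum_i \mathbf{A}_{\cdot,i}b_{ij}$ with the bound $\deg(\mathbf{A}_{\cdot,i}b_{ij})\le s_i+\deg b_{ij}$ is exactly that direct verification. Your alternative factorization through $x^{\vec{s}}$ is also the device the authors use elsewhere for the reduced/leading-coefficient version of this statement, so both routes are consistent with the paper.
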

An essential subroutine needed in our algorithm, also based on the
use of the shifted degrees, is the efficient multiplication of a
pair of matrices $\mathbf{A} \cdot \mathbf{B}$ with unbalanced
degrees.  The following result follows as a special case of~\citep[Theorem
5.6]{zhou:phd2012}.  The notation $\sum\vec{s}$, for any list
$\vec{s}$, denotes the sum of all entries in $\vec{s}$.

\begin{thm}
\label{thm:multiplyUnbalancedMatrices} Let
$\mathbf{A}\in\mathbb{K}[x]^{n \times m}$ and
$\mathbf{B}\in\mathbb{K}[x]^{m\times n}$ be given, $m\le n$. 
Suppose
$\vec{s} \in \mathbb{Z}_{\geq 0}^n$ is a shift that bounds the
corresponding column degrees of $\mathbf{A}$, and $\sum \vec{s} \ge \sum \cdeg_{\vec{s}} \mathbf{B}$. Then the 
product $\mathbf{A}\cdot \mathbf{B}$ can be computed in $O^{\sim}\left(n^{\omega}s\right)$ field operations from $\mathbb{K}$, where $s=\sum\vec{s}/n$ is the average of the entries of $\vec{s}$.
\end{thm}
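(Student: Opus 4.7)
The plan is to reduce the unbalanced product $\mathbf{A}\cdot \mathbf{B}$ to a single balanced matrix multiplication of size $O(n)\times O(n)$ with degree $O(s)$ via two successive partial column linearizations -- one on $\mathbf{A}$ and a corresponding one on $\mathbf{B}$. The total cost will then be $O^{\sim}(n^{\omega}s)$ by applying fast matrix multiplication combined with FFT-based polynomial multiplication.

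First I would linearize $\mathbf{A}$. Let $d=\lceil s\rceil$ and, for each column index $j$, let $k_{j}=\lceil (s_{j}+1)/d\rceil$, so that $\sum_j k_j = O(n)$ since $\sum_j s_j / d \le n$. Writing the $j$-th column of $\mathbf{A}$ in base $x^{d}$, i.e.\ $\mathbf{A}_{*,j}=\sum_{i=0}^{k_{j}-1}\mathbf{A}^{(i)}_{*,j}\,x^{id}$ with $\deg \mathbf{A}^{(i)}_{*,j}<d$, yields an expanded matrix $\bar{\mathbf{A}}$ of dimension $n\times \bar{m}$ with $\bar{m}=\sum_j k_j=O(n)$ whose columns all have degree $<d=O(s)$. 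Dually, the rows of $\mathbf{B}$ are expanded into $\bar{\mathbf{B}}$ whose $(i,j)$-indexed row is $x^{id}\,\mathbf{B}_{j,*}$, so that $\bar{\mathbf{A}}\cdot \bar{\mathbf{B}}=\mathbf{A}\cdot\mathbf{B}$. The crucial observation is that because $\deg\mathbf{B}_{j,k}\le t_{k}-s_{j}$ where $\vec{t}=\cdeg_{\vec{s}}\mathbf{B}$, the entry in column $k$, row $(i,j)$ of $\bar{\mathbf{B}}$ has degree at most $(k_j-1)d+\deg\mathbf{B}_{j,k}\le s_j+(t_k-s_j)=t_k$. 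Thus the column degrees of $\bar{\mathbf{B}}$ are bounded by $\vec{t}$, and by hypothesis $\sum \vec{t}\le \sum\vec{s}=ns$.

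Next I would perform a second partial linearization on the columns of $\bar{\mathbf{B}}$. For each column $k$ of $\bar{\mathbf{B}}$, expand it in base $x^{d}$ into $\ell_{k}=\lceil (t_{k}+1)/d\rceil$ columns of degree $<d$, producing $\bar{\bar{\mathbf{B}}}$ of dimension $\bar{m}\times \bar{n}$ with $\bar{n}=\sum_k \ell_k = O(n)$. The product $\bar{\mathbf{A}}\cdot \bar{\bar{\mathbf{B}}}$ is now a balanced matrix multiplication: both factors have dimensions $O(n)\times O(n)$ (after padding if necessary) and degree $O(s)$, so it can be computed in $O^{\sim}(n^{\omega}s)$ operations using the standard algorithm that combines fast matrix multiplication with FFT-based polynomial arithmetic. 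Recovering $\bar{\mathbf{A}}\cdot\bar{\mathbf{B}}$, and hence $\mathbf{A}\cdot\mathbf{B}$, from $\bar{\mathbf{A}}\cdot\bar{\bar{\mathbf{B}}}$ only involves rearranging coefficients and summing shifted blocks, costing time linear in the output size $O(n\cdot \sum t_{k})=O(n^{2}s)$.

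The step I expect to require the most care is verifying that the expansion of $\mathbf{B}$ dictated by the linearization of $\mathbf{A}$ genuinely inherits the shifted column-degree bound $\vec{t}$ without blowing up -- that is, that the multiplications by $x^{id}$ used to unfold the rows are exactly absorbed by the slack $s_{j}$ present in the $\vec{s}$-shifted degree of $\mathbf{B}$'s rows. Once this balance is established, the reduction to a single balanced multiplication is straightforward, and the cost analysis is immediate from the dimensions and degrees of $\bar{\mathbf{A}}$ and $\bar{\bar{\mathbf{B}}}$.
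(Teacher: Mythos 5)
Your proposal is correct, and it is essentially the argument behind the result the paper invokes: the paper gives no proof of Theorem~\ref{thm:multiplyUnbalancedMatrices}, citing it as a special case of Theorem~5.6 of \cite{zhou:phd2012}, and that result is proved by exactly this two-sided partial column linearization in base $x^{d}$ with $d\approx s$, using $(k_j-1)d\le s_j$ together with $\deg\mathbf{B}_{j,k}\le t_k-s_j$ to keep the expanded factors at dimension $O(n)$ and degree $O(s)$. Your key balancing step (the $x^{id}$ shifts being absorbed by the slack $s_j$ in the $\vec{s}$-column degrees of $\mathbf{B}$) is verified correctly, so the reduction to one balanced $O(n)\times O(n)$ product of degree $O(s)$ and the $O^{\sim}(n^{\omega}s)$ bound go through.
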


\subsection{Kernel and Column Bases}

The kernel of $\mathbf{F}\in\mathbb{K}\left[x\right]^{m\times n}$
is the $\mathbb{F}\left[x\right]$-module 
\[
\left\{ \mathbf{p}\in\mathbb{K}\left[x\right]^{n}~|~\mathbf{F}\mathbf{p}=0\right\} 
\]
 with a kernel basis of $\mathbf{F}$ being a basis of this module.
Formally, we have:
\begin{defn}
\label{def:kernelBasis}Given $\mathbf{F}\in\mathbb{K}\left[x\right]^{m\times n}$,
a polynomial matrix $\mathbf{N}\in\mathbb{K}\left[x\right]^{n\times k}$
is a (right) kernel basis of $\mathbf{F}$ if the following properties
hold: 
\begin{enumerate}
\item $\mathbf{N}$ is full-rank. 
\item $\mathbf{N}$ satisfies $\mathbf{F}\cdot\mathbf{N}=0$. 
\item Any $\mathbf{q}\in\mathbb{K}\left[x\right]^{n}$ satisfying $\mathbf{F}\mathbf{q}=0$
can be expressed as a linear combination of the columns of $\mathbf{N}$,
that is, there exists some polynomial vector $\mathbf{p}$ such that
$\mathbf{q}=\mathbf{N}\mathbf{p}$. 
\end{enumerate}
\end{defn}
It is not difficult to show that any pair of kernel bases $\mathbf{N}$
and $\mathbf{M}$ of $\mathbf{F}$ 
are unimodularly equivalent, that is, $\mathbf{N} = \mathbf{M} \cdot \mathbf{V}$ for some unimodular matrix $\mathbf{V}$.

A $\vec{s}$-minimal kernel basis of $\mathbf{F}$ is just a kernel
basis that is $\vec{s}$-column reduced. 
\begin{defn}
Given $\mathbf{F}\in\mathbb{K}\left[x\right]^{m\times n}$, a polynomial
matrix $\mathbf{N}\in\mathbb{K}\left[x\right]^{n\times k}$ is a $\vec{s}$-minimal
(right) kernel basis of $\mathbf{F}$ if\textbf{ $\mathbf{N}$} is
a kernel basis of $\mathbf{F}$ and $\mathbf{N}$ is $\vec{s}$-column
reduced. We also call a $\vec{s}$-minimal (right) kernel basis of
$\mathbf{F}$ a $\left(\mathbf{F},\vec{s}\right)$-kernel basis.


\end{defn}


We will need the following result from \citep{za2012} to bound the
sizes of kernel bases.
\begin{thm}
\label{thm:boundOfSumOfShiftedDegreesOfKernelBasis}Suppose $\mathbf{F}\in\mathbb{K}\left[x\right]^{m\times n}$
and $\vec{s}\in\mathbb{Z}_{\ge0}^{n}$ is a shift with entries bounding
the corresponding column degrees of $\mathbf{F}$. Then the sum of
the $\vec{s}$-column degrees of any $\vec{s}$-minimal kernel basis
of $\mathbf{F}$ is bounded by $\sum\vec{s}$. 
\end{thm}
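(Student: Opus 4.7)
The plan is to obtain the bound by combining two ingredients: exhibiting a specific kernel basis with the desired degree-sum bound, and then showing that any $\vec{s}$-column reduced kernel basis can only do at least as well. This is the natural $\vec{s}$-analogue of how one bounds column degrees of the Popov form.

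For the first ingredient, I would construct a reference kernel basis $\mathbf{N}_{0}$ of $\mathbf{F}$ and argue $\sum \cdeg_{\vec{s}}\mathbf{N}_{0} \le \sum \vec{s}$. A clean way is via a Hermite-like column triangularization: there exists a unimodular $\mathbf{U}\in \mathbb{K}[x]^{n\times n}$ with $\mathbf{F}\cdot\mathbf{U}=[\mathbf{H}~|~\mathbf{0}]$, where $\mathbf{H}$ has full column rank $r=\rank \mathbf{F}$, so the trailing $n-r$ columns of $\mathbf{U}$ form a kernel basis $\mathbf{N}_{0}$. The key bookkeeping is to show that the ``degree budget'' imposed by $\vec{s}$ on the columns of $\mathbf{F}$ is preserved by this triangularization: whatever degree is not absorbed into $\mathbf{H}$ reappears as the $\vec{s}$-column degree sum of the kernel columns. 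An equivalent, perhaps more direct, route is to use an $\vec{s}$-minimal approximant basis of $\mathbf{F}$ at sufficiently high order $\sigma \ge \sum \vec{s}$; the columns of such a basis whose $\vec{s}$-column degree is strictly less than $\sigma$ are precisely those lying in $\ker \mathbf{F}$, and the standard degree bounds for order bases yield the inequality directly.

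For the second ingredient, let $\mathbf{N}$ be any $\vec{s}$-minimal kernel basis with $\vec{t}=\cdeg_{\vec{s}}\mathbf{N}$. By Definition~\ref{def:kernelBasis}, the reference basis satisfies $\mathbf{N}_{0}=\mathbf{N}\cdot\mathbf{V}$ for a unimodular $\mathbf{V}$. Since $\mathbf{N}$ is $\vec{s}$-column reduced, the leading coefficient matrix of $x^{\vec{s}}\cdot\mathbf{N}\cdot x^{-\vec{t}}$ has full column rank, and the predictable-degree property (the content of Lemma~\ref{lem:predictableDegree}, applied with shift $\vec{t}$ to the product $\mathbf{N}\cdot\mathbf{V}$) forces the $\vec{t}$-column degrees of $\mathbf{V}$ to match the $\vec{s}$-column degrees of $\mathbf{N}_{0}$ up to nonnegativity constraints. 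In particular the sorted multiset $\vec{t}$ is element-wise bounded above by the sorted $\vec{s}$-column degrees of $\mathbf{N}_{0}$, so
\[
\sum \vec{t} \;=\; \sum \cdeg_{\vec{s}}\mathbf{N} \;\le\; \sum \cdeg_{\vec{s}}\mathbf{N}_{0} \;\le\; \sum \vec{s},
\]
which is the claim.

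The main obstacle is the first step: producing a reference kernel basis whose $\vec{s}$-column degree sum is actually bounded by $\sum\vec{s}$. This is where genuine work happens, because one must quantitatively track how the column degrees of $\mathbf{F}$ (bounded by $\vec{s}$) distribute between the triangular part $\mathbf{H}$ and the kernel columns of $\mathbf{U}$ during triangularization — essentially a degree-conservation identity tied to the determinantal structure. Once that is in place, the minimality step is purely formal and relies only on the predictable-degree behavior of $\vec{s}$-column reduced matrices.
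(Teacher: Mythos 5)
First, note that the paper does not prove Theorem~\ref{thm:boundOfSumOfShiftedDegreesOfKernelBasis}; it imports it verbatim from \cite{za2012}, so the comparison here is against the proof in that reference. Your two-part architecture is the right one, and your second ingredient is essentially correct and standard: an $\vec{s}$-column reduced basis minimizes the (sorted) $\vec{s}$-degree tuple among all bases of the module, via the strong predictable-degree property for reduced matrices together with the observation that a nonsingular $\mathbf{V}$ has a transversal of nonzero entries, whence $\sum\cdeg_{\vec{t}}\mathbf{V}\ge\sum\vec{t}$. (Be careful that Lemma~\ref{lem:predictableDegree} as stated only gives an upper bound on degrees of a product; the lower bound you need is the equality version, which requires column reducedness of $\mathbf{N}$.)

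The genuine gap is your first ingredient, and you concede as much: producing \emph{some} kernel basis with $\vec{s}$-degree sum at most $\sum\vec{s}$ is the entire content of the theorem once the minimality reduction is in place, and neither of your two routes is actually carried out. Route (a) via Hermite-like triangularization does not work as described: the kernel columns of the unimodular $\mathbf{U}$ in $\mathbf{F}\mathbf{U}=[\mathbf{H}\mid 0]$ have no a priori degree control (they can have degree $\Theta(n\deg\mathbf{F})$), and there is no ``degree-conservation identity'' to invoke unless one already minimizes those columns --- which is circular. Route (b) is the approach actually used in \cite{za2012}, but the inequality does not ``follow directly'' from the standard order-basis degree bound: that bound reads $\sum\cdeg_{\vec{s}}\mathbf{P}\le\sum\vec{s}+m\sigma$ for the \emph{whole} $n\times n$ basis $\mathbf{P}$, not $\sum\vec{s}$ for its kernel columns. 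The missing step is the cancellation argument: after reducing to the case where $\mathbf{F}$ has full row rank $m$ (the kernel depends only on the row space), each of the $m$ non-kernel columns $\mathbf{q}$ of $\mathbf{P}$ satisfies $0\ne\mathbf{F}\mathbf{q}\equiv 0 \bmod x^{\sigma}$, so $\sigma\le\deg(\mathbf{F}\mathbf{q})\le\cdeg_{\vec{s}}\mathbf{q}$ by Lemma~\ref{lem:predictableDegree}; these $m$ columns therefore absorb at least $m\sigma$ of the total, leaving at most $\sum\vec{s}$ for the remaining $n-m$ columns, which (for $\sigma$ large enough, by a separate argument) form a kernel basis. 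Without this counting step, and without the verification that the low-$\vec{s}$-degree columns really span the full kernel, the proof is incomplete.
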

A column basis of $\mathbf{F}$ is a basis for the $\mathbb{K}\left[x\right]$-module
\[
\left\{ \mathbf{F}\mathbf{p}~|~\mathbf{p}\in\mathbb{K}\left[x\right]^{n}~\right\} ~.
\]
 Such a basis can be represented as a full rank matrix $\mathbf{T}\in\mathbb{K}\left[x\right]^{m\times r}$
whose columns are the basis elements. A column basis is not unique
and indeed any column basis right multiplied by a unimodular polynomial
matrix gives another column basis.

The cost of kernel basis computation is given in \citep{za2012} while
the cost of column basis computation is given in \citep{za2013}.
In both cases they make heavy use of fast methods for order bases
(also sometimes referred to as sigma bases or minimal approximant bases) \citep{BeLa94,Giorgi2003,ZL2012}.
\begin{thm}
\label{thm:costGeneral} Let $\mathbf{F}\in\mathbb{K}\left[x\right]^{m\times n}$
with $\vec{s}=\cdeg\mathbf{F}$. Then a $\left(\mathbf{F},\vec{s}\right)$-kernel
basis can be computed with a cost of $\bigO\left(n^{\omega}s\right)$ field
operations where $s=\sum\vec{s}/n$ is the average column degree of
$\mathbf{F}$. 
\end{thm}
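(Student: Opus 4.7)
The plan is to reduce the computation of a $(\mathbf{F},\vec{s})$-kernel basis to the computation of a shifted order basis (minimal approximant basis), for which fast divide-and-conquer algorithms exist. The starting observation is that any minimal kernel basis element can be captured by an approximant condition $\mathbf{F} \cdot \mathbf{N} \equiv 0 \pmod{x^{d}}$ for sufficiently large $d$: by Theorem \ref{thm:boundOfSumOfShiftedDegreesOfKernelBasis}, the $\vec{s}$-column degrees of a minimal kernel basis sum to at most $\sum \vec{s} = ns$, so choosing $d$ of order roughly $2s + 1$ suffices when paired with a bound on individual column degrees, and the columns of the order basis with small $\vec{s}$-column degree can be identified as kernel basis elements (the remaining columns are residual and can be discarded).

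The first step is therefore to set up a shifted order basis problem with shift $\vec{s} = \cdeg \mathbf{F}$ so that the shifted degrees of the output control the cost. Using a fast order basis algorithm (e.g.\ the PM-Basis-style method in \citep{Giorgi2003,ZL2012}), a balanced instance at order $O(s)$ over an $m \times n$ input can be computed in $\widetilde{O}(n^\omega s)$ field operations, which is the target complexity.

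The main obstacle is the unbalanced-degree case: if some columns of $\mathbf{F}$ have very large degree while the average $s$ is small, naive padding to the maximum column degree would cost $O(n^\omega \max_i s_i)$ rather than $O(n^\omega s)$. To handle this, the plan is to follow the partition-and-merge strategy of \citep{za2012}: split the columns of $\mathbf{F}$ into groups of balanced degree (for instance by repeatedly halving along a sorted order of column degrees), compute a minimal kernel basis for each half recursively, and then combine the two kernel bases by a second order basis / kernel basis call. Degree control during this merge is provided by the shift mechanism and by Lemma \ref{lem:predictableDegree}, while the efficient multiplications required at the combine step are handled by Theorem \ref{thm:multiplyUnbalancedMatrices}, which is precisely designed so that products with unbalanced degrees cost $\widetilde{O}(n^\omega s)$ rather than $\widetilde{O}(n^\omega \max s_i)$.

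The final step is to check that the recursion telescopes to the claimed bound. Writing $T(n,\vec{s})$ for the cost, the recursion takes the form $T(n,\vec{s}) \le T(n,\vec{s}_1) + T(n,\vec{s}_2) + \widetilde{O}(n^\omega s)$ where $\vec{s}_1, \vec{s}_2$ partition $\vec{s}$ so that $\sum \vec{s}_1 + \sum \vec{s}_2 \le \sum \vec{s}$; since $n^\omega$ is superlinear in $n$ and the degree sums are subadditive, this solves to $\widetilde{O}(n^\omega s)$ with only logarithmic overhead, which is absorbed into the $\bigO$ notation. The hard part in this proof is the degree accounting in the merge step, since one must argue that the shifted degrees of the intermediate kernel bases remain balanced enough for Theorem \ref{thm:multiplyUnbalancedMatrices} to apply at every level of the recursion.
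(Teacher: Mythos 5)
The paper does not prove this statement at all: Theorem~\ref{thm:costGeneral} is imported verbatim from the kernel basis paper \citep{za2012}, so there is no internal proof to compare against. Your sketch is an attempt to reconstruct the argument of \citep{za2012}, and it does point at the right toolbox (order bases as the engine, shifts for degree control, Theorem~\ref{thm:multiplyUnbalancedMatrices} for the residual/merge products). However, it has a genuine gap at its core.

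The gap is the claim that an order basis of order roughly $2s+1$ already contains a full kernel basis among its low-degree columns, ``the remaining columns \ldots can be discarded.'' This is false. A column of an $\vec{s}$-minimal kernel basis can have degree as large as $\sum\vec{s}\approx ns$, not $O(s)$: for instance, if $\mathbf{F}$ is a generic $(n-1)\times n$ matrix of degree $d$, its kernel is generated by a single vector of degree about $(n-1)d$, which no approximant computation at order $O(d)$ can reveal. The actual algorithm of \citep{za2012} must therefore \emph{keep} the non-kernel columns $\mathbf{P}_2$ of the order basis, form the residual $x^{-\sigma}\mathbf{F}\mathbf{P}_2$ (whose column dimension is provably smaller), recurse on that residual to get $\mathbf{N}_2$, and return $\left[\mathbf{P}_1,\ \mathbf{P}_2\mathbf{N}_2\right]$; the balance between shrinking column dimension and growing order is exactly what makes the total cost telescope to $\bigO(n^{\omega}s)$. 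Without this residual recursion your construction simply does not output a kernel basis. A secondary problem is your merge step: the kernel of a \emph{column} partition $\left[\mathbf{F}_1,\mathbf{F}_2\right]$ is not generated by zero-padded kernel bases of the blocks (those span only a proper submodule, since they miss all relations mixing the two blocks); the composition that does hold, and that \citep{za2012} uses, is over \emph{row} partitions, where $\mathbf{N}=\mathbf{N}_1\mathbf{N}_2$ with $\mathbf{N}_1$ a kernel basis of the top block and $\mathbf{N}_2$ a kernel basis of $\mathbf{F}_2\mathbf{N}_1$. If you want to prove this theorem rather than cite it, both of these points need to be repaired.
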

~
\begin{thm}
\label{thm:fastcolbasis} 
The algorithm from \citep{za2013} can compute a column basis of a polynomial matrix $\mathbf{F}$ deterministically with
$\bigO\left(nm^{\omega-1}s\right)$ field operations
in $\mathbb{K}$, where $s$ is the average average column degree
of $\mathbf{F}$. In addition, if $r$ is the rank of $\mathbf{F}$, then 
the column basis computed has column
degrees bounded by the $r$ largest column degrees of $\mathbf{F}$,
\end{thm}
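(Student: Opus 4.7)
The plan is to reduce column basis computation to shifted minimal kernel basis computation, using a divide-and-conquer scheme on the columns of $\mathbf{F}\in\mathbb{K}[x]^{m\times n}$. I would partition the columns as $\mathbf{F}=[\mathbf{F}_{1}\mid\mathbf{F}_{2}]$, each half having about $n/2$ columns, and recursively compute column bases $\mathbf{T}_{1}$ and $\mathbf{T}_{2}$ of the two halves. Since every column of $\mathbf{F}$ is a $\mathbb{K}[x]$-linear combination of the columns of the concatenation $[\mathbf{T}_{1}\mid\mathbf{T}_{2}]$, a column basis of $\mathbf{F}$ coincides with a column basis of this much smaller matrix. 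To perform the merge I would compute a shifted minimal kernel basis of $[\mathbf{T}_{1}\mid\mathbf{T}_{2}]$ with respect to its own column-degree shift via Theorem \ref{thm:costGeneral}; the columns of this kernel basis exhibit the linear dependencies among the columns of $[\mathbf{T}_{1}\mid\mathbf{T}_{2}]$, from which an independent subset forming the merged column basis is extracted.

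The recursion bottoms out at leaves of dimension $m\times m$, each handled by one call to the kernel basis routine of Theorem \ref{thm:costGeneral} at cost $\bigO(m^{\omega}s_{i})$, where $s_{i}$ is the average column degree of the $i$-th leaf. Since the original average is $s$, the per-leaf averages satisfy $\sum_{i=1}^{n/m} s_{i}=ns/m$, so summing gives a total leaf cost of $\bigO(nm^{\omega-1}s)$. For the internal merge nodes, Theorem \ref{thm:boundOfSumOfShiftedDegreesOfKernelBasis} bounds the sum of shifted column degrees of each merged kernel basis by the sum of the input shift, preventing degree blow-up between levels, while Theorem \ref{thm:multiplyUnbalancedMatrices} lets us perform the multiplication applying the kernel basis in time $\bigO(m^{\omega}\bar{s})$, where $\bar{s}$ is the relevant average shift at that node. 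Summed across the $O(\log n)$ recursion levels this remains within $\bigO(nm^{\omega-1}s)$.

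The degree bound on the output follows by an easy induction on the recursion: at each merge, every surviving column of the new basis coincides, up to a unimodular right-multiplication that does not increase its column degree, with a column already present in $[\mathbf{T}_{1}\mid\mathbf{T}_{2}]$. Tracing this back through the levels, each of the $r$ surviving columns at the root inherits its degree from one of the columns of the original $\mathbf{F}$, and counting the $r$ output columns shows these degrees are bounded by the $r$ largest column degrees of $\mathbf{F}$.

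The main obstacle I anticipate is the careful bookkeeping of shifted degrees across recursion levels. One must ensure that the hypothesis $\sum\vec{s}\ge\sum\cdeg_{\vec{s}}\mathbf{B}$ of Theorem \ref{thm:multiplyUnbalancedMatrices} holds at every internal node, and that the averaged shift summed across each level of the recursion tree stays bounded by the original $ns$. This is precisely the point where minimality of the kernel basis, the degree-sum bound of Theorem \ref{thm:boundOfSumOfShiftedDegreesOfKernelBasis}, and the unbalanced multiplication of Theorem \ref{thm:multiplyUnbalancedMatrices} must dovetail to deliver the claimed $\bigO(nm^{\omega-1}s)$ complexity rather than a looser $\bigO(n^{\omega}s)$ bound.
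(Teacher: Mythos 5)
This theorem is not proved in the paper at all: it is imported, statement and cost bound included, directly from the cited work \citep{za2013}, so there is no internal argument to compare yours against. Judged on its own merits, your reconstruction has a genuine gap at its core. The divide-and-conquer reduction itself is sound --- the column module of $\mathbf{F}$ equals that of $\left[\mathbf{T}_{1}\mid\mathbf{T}_{2}\right]$, so a column basis of the concatenation is one of $\mathbf{F}$ --- but your merge step asserts that, given a minimal kernel basis of $\left[\mathbf{T}_{1}\mid\mathbf{T}_{2}\right]$, one obtains its column basis by ``extracting an independent subset'' of its columns. Over $\mathbb{K}[x]$ this is false: a column basis is a basis of a \emph{module}, not of a vector space, and in general no subset of the columns generates the column module. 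For the $1\times 2$ matrix $[x,\ x+1]$ the column module is all of $\mathbb{K}[x]$ (it contains $(x+1)-x=1$), so its column basis is $[1]$, which is neither a column nor a unimodular multiple of one.

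The correct passage from a kernel basis $\mathbf{N}$ of a matrix $\mathbf{A}$ to a column basis of $\mathbf{A}$ goes through \prettyref{lem:unimodular_kernel_columnBasis}: one must produce $\mathbf{U}_{R}$ such that $\left[\mathbf{N},\mathbf{U}_{R}\right]$ is unimodular and then form $\mathbf{A}\cdot\mathbf{U}_{R}$. Producing such a completion within the target cost, and controlling the degrees of the product, is precisely the technical content of \citep{za2013} and is entirely absent from your sketch. The same omission affects your base case (a rank-deficient $m\times m$ leaf is not handled by a kernel basis call alone) and it undermines the degree-bound argument: the columns of the output basis do not ``coincide, up to a unimodular right-multiplication, with columns already present'' in the input, as the example above shows. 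The claimed bound by the $r$ largest column degrees of $\mathbf{F}$ is true, but its proof rests on the minimality of the kernel bases used in the construction, not on any inheritance of columns from $\mathbf{F}$.
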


\begin{example}
\label{ex:example1} Let
\[
\mathbf{F}=\left[\begin{array}{rcrcr}
x & -{x}^{3} & -2\,{x}^{4} & 2x & -{x}^{2}\\
\noalign{\medskip}1 & -1 & -2\, x & 2 & -x\\
\noalign{\medskip}-3 & 3\,{x}^{2}+x & 2\,{x}^{2} & -\,{x}^{4}+1 & 3\, x
\end{array}\right]
\]
 be a $3\times5$ matrix over $\mathbb{Z}_{7}[x]$ having column degree
$\vec{s}=(1,3,4,4,2)$. Then a column space $\mathbf{G}$ and a
kernel basis $\mathbf{N}$ of $\mathbf{F}$ are given by 
\[
\mathbf{G}=\left[\begin{array}{rcr}
x & -{x}^{3} & -2\,{x}^{4}\\
\noalign{\medskip}1 & -1 & -2\, x\\
\noalign{\medskip}-3 & 3\,{x}^{2}+x & 2\,{x}^{2}
\end{array}\right]~~\mbox{ and }~~\mathbf{N}:=\left[\begin{array}{rc}
-1 & x\\
\noalign{\medskip}-{x}^{2} & 0\\
\noalign{\medskip}-3\, x & 0\\
\noalign{\medskip}-3 & 0\\
\noalign{\medskip}0 & 1
\end{array}\right]~.
\]
 For example, if $\{\mathbf{g}_{i}\}_{i=1,...,5}$ denote the columns
of $\mathbf{G}$ then column $4$ of $\mathbf{F}$ - denoted by $\mathbf{f}_{4}$
- is given by 
\[
\mathbf{f}_{4}=-2~\mathbf{g}_{1}-2x^{2}~\mathbf{g}_{2}+x~\mathbf{g}_{3}+2~\mathbf{g}_{4}.
\]
 Here $\cdeg_{\vec{s}}\mathbf{N}=(5,2)$ with shifted leading coefficient
matrix 
\[
\mbox{lcoeff}_{\vec{s}}(\mathbf{N})=\left[\begin{array}{rc}
0 & 1\\
-1 & 0\\
-3 & 0\\
0 & 0\\
0 & 1
\end{array}\right].
\]
The kernel basis  $\mathbf{N}$  satisfies  $\mathbf{F} \cdot \mathbf{N} = \mathbf{0}$. 
 Since $\mbox{ lcoeff}_{\vec{s}}(\mathbf{N})$ has full rank, it is $\vec{s}$-column reduced, and we have
that $\mathbf{N}$ is a $\vec{s}$-minimal kernel basis. \qed 
\end{example}
Column bases and kernel bases are closely related, as shown by the
following result from \cite{zhou:phd2012,za2013}.
\begin{lem}
\label{lem:unimodular_kernel_columnBasis} Let $\mathbf{F}\in\mathbb{K}\left[x\right]^{m\times n}$
and suppose $\mathbf{U}\in\mathbb{K}\left[x\right]^{n\times n}$ is
a unimodular matrix such that $\mathbf{F} \cdot \mathbf{U}=\left[0,\mathbf{T}\right]$
with $\mathbf{T}$ of full column rank. Partition $\mathbf{U}=\left[\mathbf{U}_{L},\mathbf{U}_{R}\right]$
such that $\mathbf{F}\cdot\mathbf{U}_{L}=0$ and $\mathbf{F} \cdot \mathbf{U}_{R}=\mathbf{T}$.
Then 
\begin{enumerate}
\item $\mathbf{U}_{L}$ is a kernel basis of $\mathbf{F}$ and $\mathbf{T}$
is a column basis of~$\mathbf{F}$. 
\item If $\mathbf{N}$ is any other kernel basis of $\mathbf{F}$, then
$\mathbf{U}^{*}=\left[\mathbf{N},~\mathbf{U}_{R}\right]$ is
unimodular and also unimodularly transforms $\mathbf{F}$ to $\left[0,\mathbf{T}\right]$. 
\end{enumerate}
\end{lem}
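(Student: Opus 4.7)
The plan is to prove the two parts in sequence, with part (2) reducing to part (1) via the fact that any two kernel bases are unimodularly equivalent. Throughout, the key tool will be the existence of $\mathbf{U}^{-1}\in\mathbb{K}[x]^{n\times n}$, which is guaranteed by unimodularity and lets us express any polynomial vector on the right in terms of the columns of $\mathbf{U}$.

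For part (1), I would first check the three defining properties of a kernel basis for $\mathbf{U}_L$. Full rank is immediate since $\mathbf{U}_L$ comprises the first columns of the unimodular (hence invertible) matrix $\mathbf{U}$, and $\mathbf{F}\cdot\mathbf{U}_L=0$ is assumed. The only nontrivial step is the completeness property: given $\mathbf{q}\in\mathbb{K}[x]^n$ with $\mathbf{F}\mathbf{q}=0$, write
\[
\mathbf{q}=\mathbf{U}\,\mathbf{U}^{-1}\mathbf{q}=\mathbf{U}_L\,\mathbf{p}_L+\mathbf{U}_R\,\mathbf{p}_R,
\]
where $[\mathbf{p}_L^T,\mathbf{p}_R^T]^T=\mathbf{U}^{-1}\mathbf{q}$. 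Applying $\mathbf{F}$ yields $0=\mathbf{T}\,\mathbf{p}_R$, and because $\mathbf{T}$ has full column rank we conclude $\mathbf{p}_R=0$, so $\mathbf{q}=\mathbf{U}_L\,\mathbf{p}_L$. To see that $\mathbf{T}$ is a column basis of $\mathbf{F}$, I would observe that its columns lie in the column span of $\mathbf{F}$ (since $\mathbf{T}=\mathbf{F}\,\mathbf{U}_R$), and conversely any $\mathbf{F}\mathbf{v}$ can be written as $\mathbf{F}\mathbf{U}\,\mathbf{U}^{-1}\mathbf{v}=[0,\mathbf{T}]\,\mathbf{U}^{-1}\mathbf{v}$, which is a $\mathbb{K}[x]$-linear combination of the columns of $\mathbf{T}$. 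Combined with the assumed full column rank of $\mathbf{T}$, this exhibits $\mathbf{T}$ as a basis.

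For part (2), I would use the fact, already noted in the excerpt just before Definition of $\vec{s}$-minimal kernel basis, that any two kernel bases of $\mathbf{F}$ are unimodularly equivalent. Hence $\mathbf{N}=\mathbf{U}_L\,\mathbf{V}$ for some unimodular $\mathbf{V}$, and therefore
\[
\mathbf{U}^{*}=[\mathbf{N},\,\mathbf{U}_R]=[\mathbf{U}_L,\,\mathbf{U}_R]\begin{bmatrix}\mathbf{V}&0\\0&\mathbf{I}\end{bmatrix}=\mathbf{U}\begin{bmatrix}\mathbf{V}&0\\0&\mathbf{I}\end{bmatrix},
\]
which is a product of two unimodular matrices, hence unimodular. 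Finally, $\mathbf{F}\cdot\mathbf{U}^{*}=[\mathbf{F}\mathbf{N},\,\mathbf{F}\mathbf{U}_R]=[0,\mathbf{T}]$ as required.

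I do not anticipate a genuine obstacle here: the argument is essentially bookkeeping enabled by the invertibility of $\mathbf{U}$ over $\mathbb{K}[x]$ and the full column rank of $\mathbf{T}$. The one point that deserves care is ensuring that the unimodular equivalence of kernel bases used in part (2) is really available at this point in the paper; it is asserted in the discussion immediately following the definition of kernel basis, so it may be invoked without further justification.
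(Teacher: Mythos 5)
Your proof is correct. The paper itself states this lemma without proof, citing \cite{zhou:phd2012,za2013}, and your argument is the standard one: the completeness of $\mathbf{U}_{L}$ via $\mathbf{q}=\mathbf{U}\mathbf{U}^{-1}\mathbf{q}$ together with the full column rank of $\mathbf{T}$ forcing $\mathbf{p}_{R}=0$, and part (2) via the unimodular equivalence of kernel bases, which the paper does assert just after Definition~\ref{def:kernelBasis}, so you may invoke it. No gaps.
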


\section{\label{sec:diagonals}Recursive Computation}

In this section we show how to recursively compute the determinant
of a nonsingular input matrix $\mathbf{F}\in\mathbb{K}\left[x\right]^{n\times n}$
having column degrees $\vec{s}$. 
The computation makes use of fast kernel basis and column basis computation.

Consider unimodularly transforming $\mathbf{F}$ to 
\begin{equation}
\mathbf{F} \cdot \mathbf{U}=\mathbf{G}=\begin{bmatrix}\mathbf{G}_{1} & 0\\
* & \mathbf{G}_{2}
\end{bmatrix},\label{eq:step1HermiteDiagonal}
\end{equation}
which eliminates a top right block and gives two square diagonal blocks
$\mathbf{G}_{1}$ and\textbf{ $\mathbf{G}_{2}$} in $\mathbf{G}$.
Then the determinant of $\mathbf{F}$ can be computed as 
\begin{equation}
\det\mathbf{F}=\frac{\det\mathbf{G}}{\det\mathbf{U}}=\frac{\det\mathbf{G}_{1}\cdot\det\mathbf{G}_{2}}{\det\mathbf{U}},\label{eq:determinantFromDiagonalBlocks}
\end{equation}
which requires us to first compute $\det\mathbf{G}_{1}$, $\det\mathbf{G}_{2}$,
and $\det\mathbf{U}$. The same procedure can then be applied 
to compute the determinant of $\mathbf{G}_{1}$ and the determinant
of $\mathbf{G}_{2}$. This can be repeated recursively until the dimension
becomes $1$. 

One major obstacle of this approach, however, is that the degrees
of the unimodular matrix $\mathbf{U}$ and the matrix $\mathbf{G}$
can be too large for efficient computation. %
{} 
To sidestep this issue, we will show that the matrices $\mathbf{G}_{1}$, 
$\mathbf{G}_{2}$, and the scalar $\det\mathbf{U}$ can in fact be computed
without computing the entire matrices $\mathbf{G}$ and $\mathbf{U}$.

\subsection{Computing the diagonal blocks}

Suppose we want $\mathbf{G}_{1}$ to have dimension $k$. We can partition\textbf{
$\mathbf{F}$ }as $\mathbf{F}=\begin{bmatrix}\mathbf{F}_{U}\\
\mathbf{F}_{D}
\end{bmatrix}$ with $k$ rows in $\mathbf{F}_{U}$, and note that both $\mathbf{F}_{U}$
and $\mathbf{F}_{D}$ are of full-rank since $\mathbf{F}$ is assumed
to be nonsingular. By partitioning $\mathbf{U}=\begin{bmatrix}\mathbf{U}_{L}~~, & \mathbf{U}_{R}\end{bmatrix}$,
with $k$ columns in $\mathbf{U}_{L}$, then 
\begin{equation}
\mathbf{F} \cdot \mathbf{U}=\begin{bmatrix}\mathbf{F}_{U}\\
\mathbf{F}_{D}
\end{bmatrix}\begin{bmatrix}\mathbf{U}_{L} & \mathbf{U}_{R}\end{bmatrix}=\begin{bmatrix}\mathbf{G}_{1} & 0\\
* & \mathbf{G}_{2}
\end{bmatrix}=\mathbf{G}.\label{eq:UPartitioned}
\end{equation}
Notice that the matrix $\mathbf{G}_{1}$ is nonsingular and is therefore
a column basis of $\mathbf{F}_{U}$. As such this can be efficiently
computed 
as mentioned in Theorem \ref{thm:fastcolbasis}. In addition, the
column basis algorithm makes the resulting column degrees of $\mathbf{G}_{1}$
small enough for $\mathbf{G}_{1}$ to be efficiently used again as
the input matrix of a new subproblem in the recursive procedure.
\begin{lem}
\label{lem:firstDiagonalBlock}The first diagonal block $\mathbf{G}_{1}$
in $\mathbf{G}$ can be computed with a cost of $\bigO\left(n^{\omega}s\right)$
and with column degrees bounded by the $k$ largest column degrees
of $\mathbf{F}$.
\end{lem}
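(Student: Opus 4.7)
The plan is to identify $\mathbf{G}_1$ explicitly as a column basis of the top block $\mathbf{F}_U$ and then invoke Theorem~\ref{thm:fastcolbasis}. From the partition in~(\ref{eq:UPartitioned}), the first $k$ rows give $\mathbf{F}_U \cdot \mathbf{U} = [\mathbf{G}_1,\, 0]$ with $\mathbf{U}$ unimodular, so by Lemma~\ref{lem:unimodular_kernel_columnBasis} the square block $\mathbf{G}_1$ is a column basis of $\mathbf{F}_U$. Nonsingularity of $\mathbf{F}$ already forces $\mathbf{F}_U$ to have full row rank $k$, as noted right after~(\ref{eq:UPartitioned}), so its column basis is a $k \times k$ nonsingular matrix and any algorithmic choice of column basis can serve as $\mathbf{G}_1$ in the triangularization.

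Next, I would apply Theorem~\ref{thm:fastcolbasis} to $\mathbf{F}_U \in \mathbb{K}[x]^{k \times n}$. This yields a deterministic column basis algorithm of cost $\bigO(n k^{\omega-1} \bar{s})$, where $\bar{s}$ is the average column degree of $\mathbf{F}_U$. Since each column of $\mathbf{F}_U$ is a subvector of the corresponding column of $\mathbf{F}$, its degree is bounded by that of the full column, so $\bar{s} \le s$. Combined with $k \le n$, the cost is absorbed into $\bigO(n^\omega s)$, as required.

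Finally, for the column-degree bound: $\mathbf{F}_U$ has rank exactly $k$, so the second assertion of Theorem~\ref{thm:fastcolbasis} guarantees that the column degrees of the computed column basis are bounded by the $k$ largest column degrees of $\mathbf{F}_U$, which are in turn bounded entrywise by the $k$ largest column degrees of $\mathbf{F}$.

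There is no real obstacle here; the only step worth checking carefully is the translation from Theorem~\ref{thm:fastcolbasis}, whose cost statement is phrased in terms of the dimensions and average degree of its own input, to a bound phrased in terms of $n$ and the average column degree $s$ of the original $\mathbf{F}$. That translation is just the elementary observations $k \le n$ and $\bar{s} \le s$ recorded above.
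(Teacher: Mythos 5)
Your proposal is correct and follows essentially the same route as the paper, which likewise observes that $\mathbf{G}_{1}$ is a column basis of the full-rank block $\mathbf{F}_{U}$ and then invokes Theorem~\ref{thm:fastcolbasis} for both the $\bigO\left(n^{\omega}s\right)$ cost and the bound by the $k$ largest column degrees. The only addition you make is to spell out the translation $k\le n$, $\bar{s}\le s$, which the paper leaves implicit.
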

For computing the second diagonal block $\mathbf{G}_{2}$, notice
that we do not need a complete unimodular matrix $\mathbf{U}$, as only
$\mathbf{U}_{R}$ is needed to compute $\mathbf{G}_{2}=\mathbf{F}_{D}\mathbf{U}_{R}$.
In fact, \prettyref{lem:unimodular_kernel_columnBasis} tells us much
more. It tells us that the matrix $\mathbf{U}_{R}$ is a right kernel
basis of $\mathbf{F}$, which makes the top right block of $\mathbf{G}$
zero. In addition the kernel basis $\mathbf{U}_{R}$ can be replaced by any
other kernel basis of $\mathbf{F}$ to give another unimodular matrix
that also transforms $\mathbf{F}_{U}$ to a column basis and also
eliminates the top right block of $\mathbf{G}$. 
\begin{lem}
\label{lem:oneStepHermiteDiagonal} Partition $\mathbf{F}=\begin{bmatrix}\mathbf{F}_{U}\\
\mathbf{F}_{D}
\end{bmatrix}$ and suppose $\mathbf{G}_{1}$ is a column basis of $\mathbf{F}_{U}$
and $\mathbf{N}$ a kernel basis of $\mathbf{F}_{U}$. Then there
is a unimodular matrix $\mathbf{U}=\left[~*~,~\mathbf{N}\right]$
such that 
\[
\mathbf{F} \cdot \mathbf{U}=\begin{bmatrix}\mathbf{G}_{1} & 0\\
* & \mathbf{G}_{2}
\end{bmatrix},
\]
 where $\mathbf{G}_{2}=\mathbf{F}_{D} \cdot \mathbf{N}$. If $\mathbf{F}$
is square nonsingular, then $\mathbf{G}_{1}$ and $\mathbf{G}_{2}$
are also square nonsingular. 
\end{lem}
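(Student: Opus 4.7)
My plan is to apply Lemma~\ref{lem:unimodular_kernel_columnBasis} to the top subblock $\mathbf{F}_U$ and then reconcile the column basis that it produces with the prescribed $\mathbf{G}_1$. Part~(1) of that lemma, applied to $\mathbf{F}_U$, supplies some unimodular $\mathbf{V}=[\mathbf{V}_L,\mathbf{V}_R]$ with $\mathbf{F}_U\cdot\mathbf{V}_L=\mathbf{0}$ and $\mathbf{F}_U\cdot\mathbf{V}_R=\mathbf{T}$, where $\mathbf{V}_L$ is a kernel basis and $\mathbf{T}$ is a column basis of $\mathbf{F}_U$. Part~(2) of the same lemma then lets the prescribed kernel basis $\mathbf{N}$ replace $\mathbf{V}_L$: the matrix $[\mathbf{N},\mathbf{V}_R]$ is still unimodular and satisfies $\mathbf{F}_U\cdot[\mathbf{N},\mathbf{V}_R]=[\mathbf{0},\mathbf{T}]$. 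A trivial column swap then yields a unimodular $[\mathbf{V}_R,\mathbf{N}]$ with $\mathbf{F}_U\cdot[\mathbf{V}_R,\mathbf{N}]=[\mathbf{T},\mathbf{0}]$, matching the target layout.

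Next I would replace the generic $\mathbf{T}$ by the given $\mathbf{G}_1$. Since any two column bases of $\mathbf{F}_U$ differ by a unimodular right factor (the column-basis analog of the kernel-basis statement recorded just after Definition~\ref{def:kernelBasis}), there is a unimodular $\mathbf{W}$ with $\mathbf{T}=\mathbf{G}_1\mathbf{W}$. Setting $\mathbf{U}_L:=\mathbf{V}_R\mathbf{W}^{-1}$, one has $\mathbf{F}_U\cdot\mathbf{U}_L=\mathbf{G}_1$, and right-multiplying $[\mathbf{V}_R,\mathbf{N}]$ by the block-diagonal unimodular matrix $\diag(\mathbf{W}^{-1},\mathbf{I})$ shows that $\mathbf{U}:=[\mathbf{U}_L,\mathbf{N}]$ remains unimodular. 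With this $\mathbf{U}$, the top block of $\mathbf{F}\cdot\mathbf{U}$ reads $[\mathbf{G}_1,\mathbf{0}]$ because $\mathbf{F}_U\mathbf{N}=\mathbf{0}$, and the bottom block reads $[\mathbf{F}_D\mathbf{U}_L,\mathbf{F}_D\mathbf{N}]$, whose right portion is $\mathbf{G}_2$ by definition.

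For the square-nonsingular claim I would argue by dimension and by taking determinants. If $\mathbf{F}$ is $n\times n$ nonsingular then its rows are linearly independent, so $\mathbf{F}_U$ has full row rank $k$; its column basis $\mathbf{G}_1$ therefore has exactly $k$ columns and is $k\times k$, which by column count forces $\mathbf{N}$ to have $n-k$ columns and $\mathbf{G}_2=\mathbf{F}_D\mathbf{N}$ to be $(n-k)\times(n-k)$. Taking determinants in $\mathbf{F}\cdot\mathbf{U}=\mathbf{G}$ then gives $\det\mathbf{F}\cdot\det\mathbf{U}=\det\mathbf{G}_1\cdot\det\mathbf{G}_2$, and since both $\det\mathbf{F}$ and $\det\mathbf{U}$ are nonzero, the two diagonal blocks are nonsingular. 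The only delicate step is this reconciliation between the abstract column basis delivered by Lemma~\ref{lem:unimodular_kernel_columnBasis} and the prescribed $\mathbf{G}_1$, done while keeping the composite matrix unimodular; the rest is essentially bookkeeping.
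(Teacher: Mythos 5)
Your argument is correct and follows the route the paper intends: the lemma is stated without an explicit proof, resting on the surrounding discussion of \prettyref{lem:unimodular_kernel_columnBasis}, and your use of parts (1) and (2) of that lemma together with the determinant identity $\det\mathbf{F}\cdot\det\mathbf{U}=\det\mathbf{G}_{1}\cdot\det\mathbf{G}_{2}$ for the nonsingularity claim is exactly that argument. The one detail you add beyond the paper --- reconciling the column basis $\mathbf{T}$ produced by that lemma with the prescribed $\mathbf{G}_{1}$ via a unimodular change of basis $\mathbf{W}$ and absorbing $\mathbf{W}^{-1}$ into the left block --- is a point the paper glosses over, and you handle it correctly.
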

Note that the blocks represented by the symbol $*$ are not needed in
our computation. These blocks may have very large degrees and cannot
be computed efficiently. 

We have just seen how $\mathbf{G}_{1}$
and $\mathbf{G}_{2}$ can be determined without computing the unimodular matrix $\mathbf{U}$. We still need to make sure that
$\mathbf{G}_{2}$ can be computed efficiently, which can be done by using the existing
algorithms for kernel basis computation and the multiplication of
matrices with unbalanced degrees. We also require that the column
degrees of $\mathbf{G}_{2}$ be small enough for $\mathbf{G}_{2}$
to be efficiently used again as the input matrix of a new subproblem
in the recursive procedure.
\begin{lem}
\label{lem:secondDiagonalBlock}
The second diagonal
block $\mathbf{G}_{2}$ can be computed with a cost of $\bigO\left(n^{\omega}s\right)$
field operations. Furthermore $\sum\cdeg\mathbf{G}_{2}\le\sum\vec{s}$.
\end{lem}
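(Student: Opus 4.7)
My plan is to follow the recipe of Lemma~\ref{lem:oneStepHermiteDiagonal}: first compute a suitable kernel basis $\mathbf{N}$ of $\mathbf{F}_U$, then form $\mathbf{G}_2 = \mathbf{F}_D\cdot\mathbf{N}$. The freedom in choosing $\mathbf{N}$---any kernel basis works---lets us insist on an $\vec{s}$-minimal one, which both makes the subsequent product computable by the unbalanced multiplication of Theorem~\ref{thm:multiplyUnbalancedMatrices} and, via Theorem~\ref{thm:boundOfSumOfShiftedDegreesOfKernelBasis}, furnishes the bound on $\sum\cdeg\mathbf{G}_2$ needed for the recursion to close.

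For the kernel basis step, since the rows of $\mathbf{F}_U$ are a subset of those of $\mathbf{F}$, the column degrees of $\mathbf{F}_U$ are entrywise bounded by $\vec{s}=\cdeg\mathbf{F}$. Theorem~\ref{thm:costGeneral} then computes an $\vec{s}$-minimal right kernel basis $\mathbf{N}$ of $\mathbf{F}_U$ in $\bigO(n^{\omega}s)$ field operations, and Theorem~\ref{thm:boundOfSumOfShiftedDegreesOfKernelBasis} gives the critical inequality
\[
\sum\cdeg_{\vec{s}}\mathbf{N}\;\le\;\sum\vec{s}.
\]
For the multiplication step, the same shift $\vec{s}$ bounds the column degrees of $\mathbf{F}_D$, and the displayed inequality is precisely the hypothesis of Theorem~\ref{thm:multiplyUnbalancedMatrices} applied to $\mathbf{F}_D\cdot\mathbf{N}$, so $\mathbf{G}_2$ is formed in $O^{\sim}(n^{\omega}s)$ operations. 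Finally, Lemma~\ref{lem:predictableDegree} (with the same shift $\vec{s}$) yields $\cdeg\mathbf{G}_2\le\cdeg_{\vec{s}}\mathbf{N}$ entrywise; summing and chaining with the previous bound delivers $\sum\cdeg\mathbf{G}_2\le\sum\vec{s}$.

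The subtle point---and the reason this whole machinery is necessary rather than a direct product---is that the individual columns of $\mathbf{N}$ may have (unshifted) degrees as large as $\Theta(ns)$, so a naive multiplication of $\mathbf{F}_D$ by $\mathbf{N}$ would cost roughly $\bigO(n^{\omega+1}s)$. The $\vec{s}$-minimality of $\mathbf{N}$ keeps the \emph{sum} $\sum\cdeg_{\vec{s}}\mathbf{N}$ small even while individual shifted column degrees may be large, and this is exactly the invariant that Theorem~\ref{thm:multiplyUnbalancedMatrices} converts into an $\bigO(n^{\omega}s)$ cost. Once the kernel basis is computed with the right shift, everything else propagates mechanically through Lemma~\ref{lem:predictableDegree}.
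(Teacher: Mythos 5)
Your proof is correct and follows essentially the same route as the paper: compute an $\vec{s}$-minimal kernel basis of $\mathbf{F}_U$ via Theorem~\ref{thm:costGeneral}, bound its shifted column degree sum by Theorem~\ref{thm:boundOfSumOfShiftedDegreesOfKernelBasis}, multiply by $\mathbf{F}_D$ using Theorem~\ref{thm:multiplyUnbalancedMatrices}, and conclude the degree bound from Lemma~\ref{lem:predictableDegree}. The closing paragraph on why naive multiplication fails is accurate motivation but not needed for the argument.
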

\begin{proof}
From \prettyref{lem:oneStepHermiteDiagonal} we have that $\mathbf{G}_{2}=\mathbf{F}_{D} \cdot \mathbf{N}$ 
with $\mathbf{N}$ a kernel basis  of $\mathbf{F}_{U}$. In fact, this
kernel basis can be made $\vec{s}$-minimal using the algorithm from
\cite{za2012}, and computing such a $\vec{s}$-minimal kernel basis
of $\mathbf{F}_{U}$ costs $\bigO\left(n^{\omega}s\right)$ field
operations by \prettyref{thm:costGeneral}. 
In addition, from \prettyref{thm:boundOfSumOfShiftedDegreesOfKernelBasis} the sum of
the $\vec{s}$-column degrees of such a $\vec{s}$-minimal $\mathbf{N}$
is bounded by $\sum\vec{s}$.

For the matrix multiplication $\mathbf{F}_{D} \cdot \mathbf{N}$, the sum
of the column degrees of $\mathbf{F}_{D}$ and the sum of the $\vec{s}$-column
degrees of $\mathbf{N}$ are both bounded by $\sum\vec{s}$. 
Therefore 
we can 
apply \prettyref{thm:multiplyUnbalancedMatrices}
directly to multiply $\mathbf{F}_{D}$ and $\mathbf{N}$ with a cost
of $\bigO\left(n^{\omega}s\right)$ field operations. 

The second statement follows from \prettyref{lem:predictableDegree}.
\end{proof}

The computation of a kernel basis $\mathbf{N}$ of $\mathbf{F}_{U}$ is actually also used as an intermediate step by the column basis algorithm for computing the column basis $\mathbf{G}_{1}$  \citep{za2013}. In other words, we can get this kernel basis from the computation of $\mathbf{G}_{1}$ with no additional work.


\subsection{Determinant of the unimodular matrix}

\prettyref{lem:firstDiagonalBlock} and \prettyref{lem:secondDiagonalBlock}
show that the two diagonal blocks in \prettyref{eq:step1HermiteDiagonal}
can be computed efficiently. In order to compute the determinant of $\mathbf{F}$
using \prettyref{eq:determinantFromDiagonalBlocks}, we still need
to know the determinant of the unimodular matrix $\mathbf{U}$
satisfying \prettyref{eq:UPartitioned}, or equivalently, we can also
find out the determinant of $\mathbf{V}=\mathbf{U}^{-1}$. The column basis computation from  \cite{za2013} for computing the diagonal block $\mathbf{G}_{1}$  also gives $\mathbf{U}_{R}$,
the matrix consisting of the right $(n-k)$ columns of $\mathbf{U}$, which is also a right kernel
basis of $\mathbf{F}_{U}$. In fact, this column basis computation also gives
 a right factor multiplied with the column basis $\mathbf{G}_{1}$ to give  $\mathbf{F}_{U}$.  The following lemma shows that this right factor coincides with the the matrix $\mathbf{V}_{U}$ consisting of the top $k$ rows of $\mathbf{V}$.  The column basis computation therefore gives both $\mathbf{U}_{R}$ and $\mathbf{V}_{U}$ with no additional work.
\begin{lem}
Let $k$ be the dimension of $\mathbf{G}_{1}$. The matrix $\mathbf{V}_{U}\in\mathbb{K}\left[x\right]^{k\times n}$
satisfies $\mathbf{G}_{1} \cdot \mathbf{V}_{U}=\mathbf{F}_{U}$ if and only
if $\mathbf{V}_{U}$ is the submatrix of $\mathbf{V}=\mathbf{U}^{-1}$
consisting of the top $k$ rows of $\mathbf{V}$. \end{lem}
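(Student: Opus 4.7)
The plan is to read off the identity $\mathbf{G}_1 \cdot \mathbf{V}_U = \mathbf{F}_U$ directly from the block structure of the unimodular transformation, and then invoke nonsingularity of $\mathbf{G}_1$ to get uniqueness.

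For the ``only if'' direction, I would start from the identity $\mathbf{F}\cdot\mathbf{U}=\mathbf{G}$ in \prettyref{eq:step1HermiteDiagonal} and multiply both sides on the right by $\mathbf{V}=\mathbf{U}^{-1}$ to obtain $\mathbf{F} = \mathbf{G}\cdot \mathbf{V}$. Partitioning $\mathbf{V} = \begin{bmatrix} \mathbf{V}_U \\ \mathbf{V}_D \end{bmatrix}$ conformally, so that $\mathbf{V}_U$ has $k$ rows, the block product becomes
\[
\begin{bmatrix}\mathbf{F}_U\\ \mathbf{F}_D\end{bmatrix}=\begin{bmatrix}\mathbf{G}_{1} & 0\\ * & \mathbf{G}_{2}\end{bmatrix}\begin{bmatrix}\mathbf{V}_U\\ \mathbf{V}_D\end{bmatrix}.
\]
The zero in the top-right block of $\mathbf{G}$ annihilates the contribution of $\mathbf{V}_D$ to the first $k$ rows, yielding $\mathbf{F}_U = \mathbf{G}_1 \cdot \mathbf{V}_U$, which is exactly the equation in the lemma.

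For the ``if'' direction, I would appeal to uniqueness. By \prettyref{lem:oneStepHermiteDiagonal} the block $\mathbf{G}_1$ is square nonsingular, so it is invertible as a matrix over $\mathbb{K}(x)$. Hence the equation $\mathbf{G}_1 \cdot \mathbf{X} = \mathbf{F}_U$ has a unique solution $\mathbf{X} \in \mathbb{K}(x)^{k \times n}$. Both the given $\mathbf{V}_U$ and the top $k$ rows of $\mathbf{U}^{-1}$ are polynomial matrices satisfying this equation (the latter by the first direction), so they must coincide.

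There is no real obstacle here; the whole content is the block identity $\mathbf{F} = \mathbf{G} \cdot \mathbf{U}^{-1}$ together with the lower-triangular shape of $\mathbf{G}$ and the nonsingularity of the pivot block $\mathbf{G}_1$. The only point worth being careful about is that uniqueness has to be stated over the field of rational functions (since $\mathbf{G}_1^{-1}$ need not be polynomial), which is why we use that $\mathbf{G}_1$ is square nonsingular rather than unimodular.
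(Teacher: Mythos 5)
Your proposal is correct and follows essentially the same route as the paper, which simply reads off $\mathbf{F}_U=\mathbf{G}_1\mathbf{V}_U$ from the block identity $\mathbf{G}\cdot\mathbf{V}=\mathbf{F}$; you additionally spell out the uniqueness argument (via nonsingularity of $\mathbf{G}_1$ over $\mathbb{K}(x)$) that the paper leaves implicit. Note only that your labels for the two directions are swapped: the block computation establishes the ``if'' direction (the top $k$ rows of $\mathbf{V}$ do satisfy the equation), while the uniqueness argument establishes the ``only if''.
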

\begin{proof}
The proof follows directly from 
\[
\mathbf{G} \cdot \mathbf{V}=\begin{bmatrix}\mathbf{G}_{1} & 0\\
* & \mathbf{G}_{2}
\end{bmatrix}\begin{bmatrix}\mathbf{V}_{U}\\
\mathbf{V}_{D}
\end{bmatrix}=\begin{bmatrix}\mathbf{F}_{U}\\
\mathbf{F}_{D}
\end{bmatrix}=\mathbf{F}.
\]

\end{proof}
While the determinant of $\mathbf{V}$ or the determinant of $\mathbf{U}$
is needed to compute the determinant of $\mathbf{F}$, a major problem
is that we do not know $\mathbf{U}_{L}$ or $\mathbf{V}_{D}$, which
may not be efficiently computed due to their large degrees. This means
we need to compute the determinant of $\mathbf{V}$ or $\mathbf{U}$
without knowing the complete matrix $\mathbf{V}$ or $\mathbf{U}$.
The following lemma shows how this can be done using just $\mathbf{U}_{R}$
and $\mathbf{V}_{U}$, which are obtained from the computation of the column basis $\mathbf{G}_{1}$.
\begin{lem}
\label{lem:scalingToDeterminant} Let $\mathbf{U}=\left[\mathbf{U}_{L},\mathbf{U}_{R}\right]$
and $\mathbf{F}$ be as before, that is, they satisfy 
\[
\mathbf{F} \cdot \mathbf{U}=\begin{bmatrix}\mathbf{F}_{U}\\
\mathbf{F}_{D}
\end{bmatrix}\begin{bmatrix}\mathbf{U}_{L} & \mathbf{U}_{R}\end{bmatrix}=\begin{bmatrix}\mathbf{G}_{1} & 0\\
* & \mathbf{G}_{2}
\end{bmatrix}=\mathbf{G},
\]
 where the row dimension of $\mathbf{F}_{U}$, the column dimension
of $\mathbf{U}_{L}$, and the dimension of $\mathbf{G}_{1}$ are $k$.
Let $\mathbf{V}=\begin{bmatrix}\mathbf{V}_{U}\\
\mathbf{V}_{D}
\end{bmatrix}$ be the inverse of $\mathbf{U}$ with $k$ rows in $\mathbf{V}_{U}$.
If $\mathbf{U}_{L}^{*}\in\mathbb{K}\left[x\right]^{n\times k}$ is
a matrix such that $\mathbf{U}^{*}=\left[\mathbf{U}_{L}^{*},\mathbf{U}_{R}\right]$
is unimodular, then 
\[
\det\mathbf{F}~=~\frac{\det\mathbf{G}\cdot\det\left(\mathbf{V}_{U}\mathbf{U}_{L}^{*}\right)}{\det\left(\mathbf{U}^{*}\right)}.
\]
\end{lem}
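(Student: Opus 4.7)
The plan is to start from the identity $\mathbf{F}\mathbf{U}=\mathbf{G}$, take determinants to obtain $\det\mathbf{F}=\det\mathbf{G}/\det\mathbf{U}$, and then express $\det\mathbf{U}$ in terms of the available quantities $\mathbf{V}_U$, $\mathbf{U}_L^*$ and $\mathbf{U}^*$. Since $\mathbf{U}_L$ and $\mathbf{V}_D$ are the blocks we cannot compute, the trick is to avoid them entirely and replace $\mathbf{U}_L$ by the known auxiliary matrix $\mathbf{U}_L^*$.

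The key step I would carry out is to multiply $\mathbf{V}=\mathbf{U}^{-1}$ by the unimodular matrix $\mathbf{U}^{*}=[\mathbf{U}_L^*,\mathbf{U}_R]$. Because $\mathbf{V}\mathbf{U}=\mathbf{I}_n$, the right $n-k$ columns of $\mathbf{V}\mathbf{U}^*$ equal the right $n-k$ columns of $\mathbf{I}_n$, namely $[0;\mathbf{I}_{n-k}]$. Partitioning $\mathbf{V}=[\mathbf{V}_U;\mathbf{V}_D]$ therefore yields the block lower-triangular form
\[
\mathbf{V}\mathbf{U}^{*}=\begin{bmatrix} \mathbf{V}_U\mathbf{U}_L^* & 0 \\ \mathbf{V}_D\mathbf{U}_L^* & \mathbf{I}_{n-k}\end{bmatrix},
\]
whose determinant is simply $\det(\mathbf{V}_U\mathbf{U}_L^*)$. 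Notice that the unknown block $\mathbf{V}_D\mathbf{U}_L^*$ sits strictly below the diagonal and so contributes nothing to the determinant; this is the essential observation that lets us bypass the large-degree data.

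Combining the two evaluations of $\det(\mathbf{V}\mathbf{U}^*)$, namely $\det\mathbf{V}\cdot\det\mathbf{U}^{*}=\det\mathbf{U}^{*}/\det\mathbf{U}$ on the one hand and $\det(\mathbf{V}_U\mathbf{U}_L^*)$ on the other, gives the identity $\det\mathbf{U}=\det\mathbf{U}^{*}/\det(\mathbf{V}_U\mathbf{U}_L^*)$. Substituting into $\det\mathbf{F}=\det\mathbf{G}/\det\mathbf{U}$ yields the claimed formula.

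There is no real obstacle: the argument is a short bookkeeping computation once one spots the correct product $\mathbf{V}\mathbf{U}^{*}$. The only point to check is that $\det(\mathbf{V}_U\mathbf{U}_L^*)$ is nonzero, but this is guaranteed because both $\mathbf{V}$ and $\mathbf{U}^{*}$ are unimodular, so their product $\mathbf{V}\mathbf{U}^{*}$ is unimodular and the block-triangular determinant factorization forces $\det(\mathbf{V}_U\mathbf{U}_L^*)$ to be a nonzero field constant. Hence the division in the statement is well defined.
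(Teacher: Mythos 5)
Your proof is correct and follows essentially the same route as the paper: both compute $\det\left(\mathbf{V}\mathbf{U}^{*}\right)$ two ways, using that $\mathbf{V}\mathbf{U}_{R}$ gives the last $n-k$ columns of the identity so that $\mathbf{V}\mathbf{U}^{*}$ is block lower triangular with blocks $\mathbf{V}_{U}\mathbf{U}_{L}^{*}$ and $\mathbf{I}_{n-k}$ on the diagonal. Your added observation that $\det\left(\mathbf{V}_{U}\mathbf{U}_{L}^{*}\right)$ is a nonzero constant is a small but welcome extra point the paper leaves implicit.
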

\begin{proof}
Since $\det\mathbf{F}~=~\det\mathbf{G}\cdot\det\mathbf{V}$, we just
need to show that $\det\mathbf{V}=\det\left(\mathbf{V}_{U}\mathbf{U}_{L}^{*}\right)/\det\left(\mathbf{U}^{*}\right)$.
This follows from 
\begin{eqnarray*}
\det\mathbf{V}\cdot\det\mathbf{U}^{*} & = & \det\left(\mathbf{V}\cdot\mathbf{U}^{*}\right)\\
 & = & \det\left(\begin{bmatrix}\mathbf{V}_{U}\\
\mathbf{V}_{D}
\end{bmatrix}\begin{bmatrix}\mathbf{U}_{L}^{*} & \mathbf{U}_{R}\end{bmatrix}\right)\\
 & = & \det\left(\begin{bmatrix}\mathbf{V}_{U}\mathbf{U}_{L}^{*} & 0\\
* & I
\end{bmatrix}\right)\\
 & = & \det\left(\mathbf{V}_{U}\mathbf{U}_{L}^{*}\right).
\end{eqnarray*}

\end{proof}
\prettyref{lem:scalingToDeterminant} shows that the determinant of
$\mathbf{V}$ can be computed using $\mathbf{V}_{U}$, $\mathbf{U}_{R}$, and
a unimodular completion $\mathbf{U}^*$ of $\mathbf{U}_{R}$.
In fact, this can be made more efficient still by noticing that the
higher degree parts do not affect the computation.
\begin{lem}
\label{lem:determinantOfUnimodular}If $\mathbf{U}\in\mathbb{K}\left[x\right]^{n\times n}$
is unimodular, then $\det\mathbf{U}=\det\left(\mathbf{U}\mod x\right)=\det\left(\mathbf{U}\left(0\right)\right)$.\end{lem}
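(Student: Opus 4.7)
The plan is to exploit the two defining facts that collide perfectly here: first, the determinant of a unimodular polynomial matrix is by definition a unit of $\mathbb{K}[x]$, and the units of $\mathbb{K}[x]$ are exactly the nonzero constants; second, the determinant is a polynomial expression in the matrix entries, so it commutes with the ring homomorphism of evaluation at $x=0$.

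Concretely, I would proceed in three short steps. First, note that $\mathbf{U}$ being unimodular means there exists $\mathbf{W}\in\mathbb{K}[x]^{n\times n}$ with $\mathbf{U}\mathbf{W}=\mathbf{I}$, so $(\det\mathbf{U})(\det\mathbf{W})=1$ in $\mathbb{K}[x]$. Since $\mathbb{K}[x]$ is an integral domain, both $\det\mathbf{U}$ and $\det\mathbf{W}$ must be nonzero elements of $\mathbb{K}$, i.e.\ constants. Second, observe that the evaluation map $\varphi\colon \mathbb{K}[x]\to\mathbb{K}$ sending $f(x)\mapsto f(0)$ is a ring homomorphism, and the Leibniz formula for the determinant is a polynomial expression in the entries, so applying $\varphi$ entrywise gives
\[
\det(\mathbf{U}(0)) \;=\; (\det\mathbf{U})(0).
\]
Third, since $\det\mathbf{U}\in\mathbb{K}$ is a constant polynomial, $(\det\mathbf{U})(0)=\det\mathbf{U}$, and the two stated equalities follow (noting that $\mathbf{U}\bmod x$ is literally the constant-coefficient matrix $\mathbf{U}(0)$).

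There is essentially no obstacle here; the lemma is really a one-line consequence of ``units of $\mathbb{K}[x]$ are constants'' plus the naturality of the determinant. The only thing worth being careful about is to separate the two uses of ``$(\cdot)(0)$'' — one applied to the entries of $\mathbf{U}$ and one applied to the constant polynomial $\det\mathbf{U}$ — so that the reader sees why the determinant-evaluation exchange is a triviality rather than sleight of hand.
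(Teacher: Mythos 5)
Your proof is correct and follows essentially the same route as the paper's: commute the determinant with evaluation at $x=0$ and use that $\det\mathbf{U}$ is a nonzero constant because $\mathbf{U}$ is unimodular. You are in fact slightly more explicit than the paper about why $\det\mathbf{U}$ is constant (units of $\mathbb{K}[x]$ are the nonzero constants), which is a welcome clarification but not a different argument.
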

\begin{proof}
Note that $\det\left(\mathbf{U}\left(\alpha\right)\right)=\left(\det\mathbf{U}\right)\left(\alpha\right)$
for any $\alpha\in\mathbb{K}$, that is, the result is the same whether
we do evaluation before or after computing the determinant. Taking
$\alpha=0$, we have 
\[
\det\left(\mathbf{U}\mod x\right)=\det\left(\mathbf{U}\left(0\right)\right)=\left(\det\mathbf{U}\right)\left(0\right)=\det\left(\mathbf{U}\right)\mod x=\det\mathbf{U}.
\]

\end{proof}
Lemma \ref{lem:determinantOfUnimodular} allows us to use just the degree zero coefficient matrices in the computation. Hence \prettyref{lem:scalingToDeterminant} can be improved as follows.
\begin{lem}
\label{lem:scalingToDeterminantSimplified} Let $\mathbf{F}$, $\mathbf{U}=\left[\mathbf{U}_{L},\mathbf{U}_{R}\right]$,
and $\mathbf{V}=\begin{bmatrix}\mathbf{V}_{U}\\
\mathbf{V}_{D}
\end{bmatrix}$ be as before. Let $U_{R}=\mathbf{U}_{R}\mod x$ and $V_{U}=\mathbf{V}_{U}\mod x$
be the constant matrices of $\mathbf{U}_{R}$ and $\mathbf{V}_{U}$,
respectively. If $U_{L}^{*}\in\mathbb{K}^{n\times*}$ is a matrix
such that $U^{*}=\left[U_{L}^{*},U_{R}\right]$ is unimodular, then
\[
\det\mathbf{F}~=~\frac{\det\mathbf{G}\cdot\det\left(V_{U}U_{L}^{*}\right)}{\det\left(U^{*}\right)}.
\]
 \end{lem}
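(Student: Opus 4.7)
The plan is to reduce the claim to Lemma \ref{lem:scalingToDeterminant} via a two-step argument. First, I will verify the stated formula for one distinguished choice of $U_L^*$, namely $U_L^* := \mathbf{U}_L(0)$ (the constant term of the true left block of $\mathbf{U}$). Applied with the polynomial unimodular completion $\mathbf{U}^*:=\mathbf{U}$ itself, Lemma \ref{lem:scalingToDeterminant} gives $\det\mathbf{F} = \det\mathbf{G}\cdot\det(\mathbf{V}_U\mathbf{U}_L)/\det\mathbf{U}$. Since $\mathbf{V}\mathbf{U}=I_n$, its upper-left block is $\mathbf{V}_U\mathbf{U}_L = I_k$ and its upper-right block is $\mathbf{V}_U\mathbf{U}_R = 0$; so $\mathbf{U}$ and $\mathbf{V}_U\mathbf{U}_L$ are both polynomial unimodular matrices. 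Lemma \ref{lem:determinantOfUnimodular} then collapses their determinants to the $x=0$ evaluation, yielding $\det\mathbf{U} = \det[\mathbf{U}_L(0),U_R]$ and $\det(\mathbf{V}_U\mathbf{U}_L) = \det(V_U\cdot\mathbf{U}_L(0))$. Substituting into the above establishes the stated formula for $U_L^* = \mathbf{U}_L(0)$.

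Second, I will show the right-hand side $\det\mathbf{G}\cdot\det(V_U U_L^*)/\det[U_L^*,U_R]$ is independent of the choice of $U_L^*\in\mathbb{K}^{n\times k}$ subject to $[U_L^*,U_R]$ being invertible over $\mathbb{K}$. Given two such candidates $U_L^*$ and $\tilde U_L^*$, the columns of $[\tilde U_L^*,U_R]$ form a basis of $\mathbb{K}^n$, so $U_L^* = \tilde U_L^* A + U_R B$ for unique $A\in\mathbb{K}^{k\times k}$, $B\in\mathbb{K}^{(n-k)\times k}$. Reading this as a change of basis yields $\det[U_L^*,U_R] = \det[\tilde U_L^*,U_R]\cdot\det A$, which forces $A\in\mathrm{GL}_k(\mathbb{K})$. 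The auxiliary identity $V_U U_R = 0$, recorded in the previous paragraph, then gives $V_U U_L^* = V_U \tilde U_L^* A$ and hence $\det(V_U U_L^*) = \det(V_U \tilde U_L^*)\cdot\det A$. Cancelling $\det A$ proves the invariance.

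Combining the two steps, the formula holds at the distinguished choice $U_L^* = \mathbf{U}_L(0)$ and is invariant under the choice, so it holds for every $U_L^*$ permitted by the hypothesis. The main subtlety is that a given constant $U_L^*$ with $[U_L^*,U_R]$ invertible in $\mathbb{K}^{n\times n}$ does not in general extend to a polynomial unimodular matrix $[U_L^*,\mathbf{U}_R]$: embedding $U_L^*$ as a constant polynomial matrix only guarantees that $\det[U_L^*,\mathbf{U}_R]$ is nonzero at $x=0$, not that it is a nonzero constant. This is precisely why Lemma \ref{lem:scalingToDeterminant} cannot be applied directly to the given $U_L^*$ and the invariance argument is needed.
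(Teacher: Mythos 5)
Your proof is correct, but it takes a different route from the paper's. The paper's proof simply notes, via \prettyref{lem:determinantOfUnimodular}, that $\det\mathbf{V}=\det V$ and $\det\mathbf{U}^{*}=\det U^{*}$, and then re-runs the block computation of \prettyref{lem:scalingToDeterminant} entirely at the level of constant matrices: with $V=\mathbf{V}\bmod x$ one has $V\cdot U^{*}=\left[\begin{smallmatrix}V_{U}U_{L}^{*} & 0\\ * & I\end{smallmatrix}\right]$ (using $V_{U}U_{R}=0$ and $V_{D}U_{R}=I$), so $\det V\cdot\det U^{*}=\det\left(V_{U}U_{L}^{*}\right)$. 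You instead anchor the identity at the single choice $U_{L}^{*}=\mathbf{U}_{L}(0)$, where \prettyref{lem:scalingToDeterminant} applies verbatim with $\mathbf{U}^{*}=\mathbf{U}$, and then prove that the right-hand side is invariant over all constant completions of $U_{R}$ via the change-of-basis factorization $[U_{L}^{*},U_{R}]=[\tilde{U}_{L}^{*},U_{R}]\left[\begin{smallmatrix}A & 0\\ B & I\end{smallmatrix}\right]$ together with $V_{U}U_{R}=0$. Both arguments rest on the same two facts ($V_{U}U_{R}=0$ and \prettyref{lem:determinantOfUnimodular}), but your version has the merit of making explicit the point the paper's one-line proof glosses over: a constant $U_{L}^{*}$ with $[U_{L}^{*},U_{R}]$ invertible need not lift to a polynomial unimodular completion $[U_{L}^{*},\mathbf{U}_{R}]$ of $\mathbf{U}_{R}$, so the polynomial lemma genuinely cannot be invoked directly for an arbitrary admissible $U_{L}^{*}$; the paper's approach avoids this by working with the full constant inverse $V$ rather than only its top block, while yours avoids it by the invariance argument.
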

\begin{proof}
\prettyref{lem:determinantOfUnimodular} implies that $\det\mathbf{V}=\det V$
and $\det\mathbf{U}^{*}=\det U^{*}$. These can then be substituted
in the proof of \prettyref{lem:scalingToDeterminant} to obtain the
result.\end{proof}
\begin{rem}
\prettyref{lem:scalingToDeterminantSimplified} requires us to compute
$U_{L}^{*}\in\mathbb{K}^{n\times *}$ a matrix such that $U^{*}=\left[U_{L}^{*},U_{R}\right]$
is unimodular. This can be obtained from the unimodular matrix
that transforms $V_{U}$ to its reduced column echelon form computed
using the Gauss Jordan transform algorithm from \citep{storjohann:phd2000}
with a cost of $O\left(nm^{\omega-1}\right)$ where $m$ is the column dimension of $U_{L}^{*}$. 
\end{rem}
We now have all the ingredients needed for computing the determinant
of $\mathbf{F}$. A recursive algorithm is given in Algorithm \prettyref{alg:determinant},
which computes the determinant of $\mathbf{F}$ as the product of
the determinant of $\mathbf{V}$ and the determinant of $\mathbf{G}$.
The determinant of $\mathbf{G}$ is computed by recursively computing
the determinants of its diagonal blocks $\mathbf{G}_{1}$ and $\mathbf{G}_{2}$.

\begin{algorithm}[t]
\caption{$\determinant(\mathbf{F})$}
\label{alg:determinant}

\begin{algorithmic}[1]
\REQUIRE{$\mathbf{F}\in\mathbb{K}\left[x\right]^{n\times n}$,  nonsingular.
}

\ENSURE{the determinant of $\mathbf{F}$.}

\STATE{$\begin{bmatrix}\mathbf{F}_{U}\\
\mathbf{F}_{D}
\end{bmatrix}:=\mathbf{F}$, with $\mathbf{F}_{U}$ consists of the top $\left\lceil n/2\right\rceil $
rows of $\mathbf{F}$;}

\STATE{\textbf{if }$n=1$ \textbf{then} \textbf{return} $\mathbf{F}$; \textbf{endif};}

\STATE{$\mathbf{G}_{1},\mathbf{U}_{R},\mathbf{V}_{U}:=\colBasis(\mathbf{F}_{U})$;
\\ \hspace{0.15in} {\bf Note:} Here $\colBasis()$ also returns the kernel basis $\mathbf{U}_{R}$ and 
the
right factor $\mathbf{V}_{U}$
\\ \hspace{0.3in}
 it computed in addition to the column basis $\mathbf{G}_{1}$.}

\STATE{$\mathbf{G}_{2}:=\mathbf{F}_{D}\mathbf{U}_{R}$;}

\STATE{$U_{R}:=\mathbf{U}_{R}\mod x$; $V_{U}:=\mathbf{V}_{U}\mod x$;}

\STATE{
Compute $U_{L}^{*}\in\mathbb{K}^{n\times k}$ , a matrix that makes
 $U^{*}=\left[U_{L}^{*},U_{R}\right]$ unimodular;
}

\STATE{$d_{V}:=\det\left(V_{U}U_{L}^{*}\right)/\det(U^{*};$}

\STATE{$\mathbf{d}_{G}:=\determinant(\mathbf{G}_{1})\cdot\determinant(\mathbf{G}_{2});$}

\STATE{\textbf{return} $d_{V}\cdot\mathbf{d}_{G}$;}
\end{algorithmic}
\end{algorithm}

\begin{example}
In order to see correctness of the algorithm, let
\[
\mathbf{F}=\left[\begin{array}{rcccc}
x & -{x}^{3} & -2\,{x}^{4} & 2x & -{x}^{2}\\
\noalign{\medskip}1 & -1 & -2\, x & 2 & -x\\
\noalign{\medskip}-3 & 3\,{x}^{2}+x & 2\,{x}^{2} & -\,{x}^{4}+1 & 3\, x\\
\noalign{\medskip}0 & 1 & {x}^{2}+2\, x-2 & \,{x}^{3}+2x-2 & 0\\
\noalign{\medskip}1 & -{x}^{2}+2 & -2\,{x}^{3}-3\, x+3 & 2x+2 & 0
\end{array}\right]
\]
 working over $\mathbb{Z}_{7}[x]$. If $\mathbf{F}_{U}$ denotes the
top three rows of $\mathbf{F}$ then a column basis 
\[
\mathbf{G}_{1}=\left[\begin{array}{rcr}
x & -{x}^{3} & -2\,{x}^{4}\\
\noalign{\medskip}1 & -1 & -2\, x\\
\noalign{\medskip}-3 & 3\,{x}^{2}+x & 2\,{x}^{2}
\end{array}\right]
\]
 and minimal kernel basis 
\[
\mathbf{U}_{R}=\left[\begin{array}{rc}
-1 & x\\
-x^{2} & 0\\
-3x & 0\\
-3 & 0\\
0 & 1
\end{array}\right]
\]
for $\mathbf{F}_{U}$ were given in Example \ref{ex:example1}. The computation of the
column basis also gives the right factor 
\[
\mathbf{V}_{U}=\left[\begin{array}{cccrr}
1 & 0 & 0 & 2 & -x\\
0 & 1 & 0 & 2x^{2} & 0\\
0 & 0 & 1 & -x & 0
\end{array}\right].
\]
The constant term matrices are then 
\[
{U}_{R}=\left[\begin{array}{rc}
-1 & 0\\
0 & 0\\
0 & 0\\
-3 & 0\\
0 & 1
\end{array}\right]~\mbox{and}~~{V}_{U}=\left[\begin{array}{cccrr}
1 & 0 & 0 & 2 & 0\\
0 & 1 & 0 & 0 & 0\\
0 & 0 & 1 & 0 & 0
\end{array}\right]
\]
with Gaussian-Jordan used to fnd a unimodular completion of $U_{R}$ as  
\[
U_{L}^{*}=\left[\begin{array}{ccc}
1 & 0 & 0\\
0 & 1 & 0\\
0 & 0 & 1\\
0 & 0 & 0\\
0 & 0 & 0
\end{array}\right].
\]
 The determinant of $\mathbf{V}$ is then computed as 
\[
\frac{\det\left(V_{U}U_{L}^{*}\right)}{\det\left(U^{*}\right)}=-\frac{1}{3}=2.
\]
Recursively computing the determinant of $\mathbf{G}_{1}$ and $\mathbf{G}_{2}$ gives $\det\mathbf{G}_{1}=x^{6}-x^{4}$ and $\det\mathbf{G}_{2}=x^{4}-x$. 
Accumulating the above gives the determinant of $\mathbf{F}$ as
\[
\det\mathbf{F}=\det\mathbf{V}\cdot\det\mathbf{G}_{1}\cdot\det\mathbf{G}_{2}=2\left(x^{6}-x^{4}\right)\left(x^{4}-x\right)=2x^{10}-2x^{8}-2x^{7}+2x^{5}.
\]
\qed \end{example}

\subsection{Computational cost}
\begin{thm}
\label{thm:diagonalCost} Algorithm \prettyref{alg:determinant} 
costs $\bigO\left(n^{\omega}s\right)$ field operations to compute
the determinant of a nonsingular matrix $\mathbf{F}\in\mathbb{K}\left[x\right]^{n\times n}$,
where $s$ is the average column degree of $\mathbf{F}$. \end{thm}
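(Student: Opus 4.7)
The plan is to perform a standard recursion-tree analysis of Algorithm~\ref{alg:determinant} by combining the per-level cost estimates already established in Lemmas~\ref{lem:firstDiagonalBlock} and~\ref{lem:secondDiagonalBlock} with the degree bounds produced along the recursion. Writing the cost in terms of the sum of column degrees $D := \sum \vec{s} = n s$ will make the bookkeeping cleaner than tracking the ``average'' $s$ directly, since the subproblems have different dimensions.

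First I would account for the work done at a single invocation of $\determinant(\mathbf{F})$ with input of size $n$ and total column degree $D$. By Lemma~\ref{lem:firstDiagonalBlock}, line~3 (the column basis call which also returns $\mathbf{U}_R$ and $\mathbf{V}_U$) costs $\bigO(n^{\omega-1} D)$. By Lemma~\ref{lem:secondDiagonalBlock}, line~4 (the product $\mathbf{F}_D \cdot \mathbf{U}_R$ done via Theorem~\ref{thm:multiplyUnbalancedMatrices}) also costs $\bigO(n^{\omega-1} D)$. Lines~5--7 only manipulate constant matrices over $\mathbb{K}$ of dimension at most $n$, so by the standard $O(n^\omega)$ bound for determinant, inverse, and the Gauss--Jordan based unimodular completion noted in the remark following Lemma~\ref{lem:scalingToDeterminantSimplified}, these contribute $O(n^\omega)$, which is absorbed. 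So the non-recursive cost at this call is $\bigO(n^{\omega-1} D)$.

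Next I would track the parameters passed to the recursive calls. The dimension of $\mathbf{G}_1$ is $k = \lceil n/2 \rceil$ and of $\mathbf{G}_2$ is $n-k$. By Lemma~\ref{lem:firstDiagonalBlock}, $\sum \cdeg \mathbf{G}_1$ is bounded by the sum of the $k$ largest column degrees of $\mathbf{F}$, hence by $D$; by Lemma~\ref{lem:secondDiagonalBlock}, $\sum \cdeg \mathbf{G}_2 \le D$ as well. Letting $T(n, D)$ denote the total cost of the algorithm on an input of dimension $n$ with column-degree sum $D$, I then have the recurrence
\[
T(n, D) \;\le\; c \, n^{\omega-1} D \;+\; T(\lceil n/2 \rceil, D_1) \;+\; T(\lfloor n/2 \rfloor, D_2),
\]
with $D_1, D_2 \le D$. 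The key (and only mildly delicate) point is that although the column-degree sums in the two subproblems can \emph{each} be as large as $D$, the dimension drops by a factor of two, so the geometric decay in $n^{\omega-1}$ dominates: unrolling, the contribution from depth $\ell$ is at most $2^\ell \cdot c (n/2^\ell)^{\omega-1} D = c \, n^{\omega-1} D \cdot 2^{-(\omega-2)\ell}$, and since $\omega > 2$ this series telescopes to $O(n^{\omega-1} D)$; for $\omega = 2$ the recursion depth $O(\log n)$ contributes only an extra logarithmic factor which is hidden in the $\bigO$ notation. With $D = ns$ this yields $T(n, D) = \bigO(n^\omega s)$, establishing the theorem.

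The only real obstacle is the observation above: naively one might fear that doubling of the degree-sum bound per level spoils the analysis, but the per-level work $n^{\omega-1} D$ decreases geometrically with depth despite this, because the dimension halves. Everything else is a direct appeal to the per-step lemmas already proved.
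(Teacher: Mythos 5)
Your proposal is correct and follows essentially the same route as the paper's own proof: a per-call cost of $\bigO(m^{\omega-1}\,ns)$ from Lemmas \ref{lem:firstDiagonalBlock} and \ref{lem:secondDiagonalBlock}, the observation that the column-degree sum of each subproblem stays bounded by $ns$ while the dimension halves, and the resulting recurrence $g(m)\in \bigO(m^{\omega-1}ns)+2g(\lceil m/2\rceil)$ resolved by geometric decay in $m^{\omega-1}$. Your reformulation in terms of $D=ns$ and the explicit remark about the $\omega=2$ case are only cosmetic differences.
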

\begin{proof}
From \prettyref{lem:firstDiagonalBlock} and \prettyref{lem:secondDiagonalBlock}
the computation of the two diagonal blocks $\mathbf{G}_{1}$ and $\mathbf{G}_{2}$
costs $O^{\sim}\left(n^{\omega}s\right)$ field operations, which
dominates the cost of the other operations in the algorithm. 

Now consider the cost of the algorithm on a subproblem in the recursive computation. If we let the cost
be $g(m)$ for a subproblem whose input matrix has dimension $m$, by \prettyref{lem:firstDiagonalBlock}
and \prettyref{lem:secondDiagonalBlock} the sum of the column degrees
of the input matrix is still bounded by $ns$, but the average column
degree is now bounded by $ns/m$. The cost on the subproblem is then
\begin{eqnarray*}
g(m) & \in & O^{\sim}(m^{\omega}\left(ns/m\right))+g(\left\lceil m/2\right\rceil )+g(\left\lfloor m/2\right\rfloor )\\
 & \subset & O^{\sim}(m^{\omega-1}ns)+2g(\left\lceil m/2\right\rceil )\\
 & \subset & O^{\sim}(m^{\omega-1}ns).
\end{eqnarray*}
 The cost on the original problem when the dimension $m=n$ is therefore
$O^{\sim}\left(n^{\omega}s\right)$.
\end{proof}

\section{Conclusion\label{sec:Future-Research}}

In this paper we have given a new, fast, deterministic algorithm for computing
the determinant of a nonsingular polynomial matrix. Our method relies on the efficient, deterministic computation of the diagonal elements of a triangularization of the input matrix. This in turn relies on recent efficient methods \cite{za2013,za2012} for computing shifted minimal kernel and column bases of polynomial matrices.

In a future report we will show how our triangularization technique results in a fast, deterministic algorithm for finding a Hermite normal form. Other directions of interest include making use of the diagonalization procedures in domains such as matrices of differential or, more generally, of Ore operators, particularly for computing normal forms. Partial results had been reported in \cite{davies-cheng:2006}, at least in the case of Popov normal forms. 
Similarly we are interested in applying our block elimination techniques using kernal bases to computing the Dieudonn\'e determinant and quasideterminant of matrices for Ore polynomial rings. These are the two main generalizations of determinants for matrices of noncommutative polynomials.  Degree bounds for these noncommutative determinants have been used in \cite{markkim} for modular computation of normal forms.

\bibliographystyle{plain}


\end{document}